\newcommand\blfootnote[1]{%
  \begingroup
  \renewcommand\thefootnote{}\footnote{#1}%
  \addtocounter{footnote}{-1}%
  \endgroup
}
\newcommand{\N}{\mathbb{N}}
\newcommand{\Z}{\mathbb{Z}}
\newcommand{\E}{\mathbb{E}}
\newcommand{\mcE}{\mathcal{E}}
\newcommand{\prob}{\mathbb{P}}
\newcommand{\until}[1]{\{1,\dots, #1\}}
\newtheorem{theorem}{Theorem}[section]
\newtheorem{lemma}[theorem]{Lemma}
\newtheorem{remark}[theorem]{Remark} 
{
      \theoremstyle{plain}
            
}
\newtheorem{definition}[theorem]{Definition}
\DeclareSymbolFont{bbold}{U}{bbold}{m}{n}
\DeclareSymbolFontAlphabet{\mathbbold}{bbold}
\newcommand{\vect}[1]{\mathbbold{#1}}
\newcommand{\setdef}[2]{\{#1 \; | \; #2\}}
\setlist{nosep}
\newcommand{\union}{\operatorname{\cup}}
\def\BibTeX{{\rm B\kern-.05em{\sc i\kern-.025em b}\kern-.08em
    T\kern-.1667em\lower.7ex\hbox{E}\kern-.125emX}}
\begin{document}
\title{Polarization and Fluctuations \\ in Signed Social Networks} 

\author{Pedro Cisneros-Velarde, Kevin S. Chan, Francesco Bullo}
%
%
%  \thanks{Pedro Cisneros-Velarde (e-mail: pacisne@gmail.com) and Francesco
%    Bullo (e-mail: bullo@engineering.ucsb.edu) are with the Center for
%    Control, Dynamical Systems and Computation, University of California,
%    Santa Barbara.}
%
%  \thanks{Kevin S. Chan (e-mail: kevin.s.chan.civ@mail.mil) is with the
%    U.S.\ Army Research Laboratory, 2800 Powder Mill Rd., Adelphi, MD
%    20783, USA.}  }

\maketitle

\begin{abstract}
   Much recent research on social networks has focused on the modeling and
   analysis of how opinions evolve as a function of interpersonal
   relationships.  It is also of great interest to model and understand the
   implications of friendly and antagonistic relationships.  In this paper,
   we propose a new, simple and intuitive model that incorporates the
   socio-psychological phenomenon of the boomerang effect in opinion
   dynamics. We establish that, under certain conditions on the structure
   of the signed network that corresponds to the so-called~\emph{structural
     balance} property, the opinions in the network polarize.  Compared to
   other models in the literature, our model displays a richer and perhaps
   more intuitive behavior of the opinions when the social network does not
   satisfy structural balance. In particular, we analyze signed networks in
   which the opinions show persistent fluctuations (including the case of
   the so-called \emph{clustering balance}).
\end{abstract}

\section{Introduction}
\blfootnote{This work was supported by, or in part by, the Army Research
    Laboratory and the Army Research Office under grant number
    W911NF-15-1-0577, the Cooperative Agreement Number W911NF-18-2-0066,
    and the National Science Foundation under grant number DGE-1258507. The
    views and conclusions contained in this document are those of the
    authors and should not be interpreted as representing the official
    policies, either expressed or implied, of the Army Research Laboratory
    or the U.S. Government. The U.S. Government is authorized to reproduce
    and distribute reprints for Government purposes notwithstanding any
    copyright notation herein.}
\blfootnote{Pedro Cisneros-Velarde (e-mail:
    pacisne@gmail.com) and Francesco Bullo (e-mail:
    bullo@engineering.ucsb.edu) are with the Center for Control, Dynamical
    Systems and Computation, University of California, Santa Barbara.}
\blfootnote{Kevin S. Chan (e-mail: kevin.s.chan.civ@mail.mil) is with the U.S.\ Army Research Laboratory, 2800 Powder Mill Rd., Adelphi, MD 20783, USA.}
There have been various opinion dynamics models in the literature
\cite{DA-AO:11,AVP-RT:17}. Opinions can be modeled as real numbers taking
values in the closed interval $[0,1]$, where $0$ means an agent completely
disagrees with a particular issue, and $1$ that she completely agrees. One
important question to answer is how the evolution and final distribution of
opinions in a social network depend on the underlying network's topology
and of the (positive) influence structure among the individuals. More
recently, signed graphs were introduced into the opinion dynamics
literature. Signed graphs represent a natural way to model positive and
negative relationships among individuals. For example, a sociological
relevant concept is \textit{structural balance}, in which the members of a
social network can either have only positive relationships or be divided in
two factions in which members of the same faction have positive
relationships but negative ones with members of the other faction. The
seminal work by Altafini \cite{CA:13} proposed a continuous time model over
a signed graph where the opinions can take any real value. It is shown that
when the underlying graph satisfies structural balance, the opinions
converge to bipartite consensus and polarize, i.e., all opinions have the
same absolute value with their signs indicating which agents belong to the
same faction (if there is one faction, all opinions have the same sign).  A
discrete-time signed opinion model which is a counterpart of the Altafini
model has also been proposed~\cite{JMH:14,ZM-GS-KHJ-MC-YH:16}, in which
bipartite consensus is also attained under structural balance. These two
models have initiated a lot of research in the field of signed opinion
dynamics, and are, arguably, the most popular models in the literature.
Extensions of these models and further analysis have been done in the
literature, as can be noted in the recent work~\cite{JL-XC-TB-MAB:17} and
the references therein.  Note, however, that both Altafini models and their
extensions present an unrealistic opinion vanishing behavior (i.e., the
opinions converge to zero) whenever the property of structural balance is
lost in the underlying social network.

Another class of models in opinions dynamics was proposed by Li et
al. \cite{YL-WC-YW-ZZ:15} and is based on an extension of the voter model
to signed graphs. In this model, individuals initially take binary opinion
values (e.g., $0$ and $1$). Then, at each subsequent time step, an
individual is selected according to some process and updates her opinions
by copying the same or the opposite opinion of one of her neighbors
according to the sign of their relationship.  By design, opinions
cannot vanish under generic signed networks; however, the opinion values
are simply discrete. Whenever the graph satisfies structural balance, they
showed that the opinions polarize: one faction takes one value, while the
other faction takes the remaining one. Recently, Lin et
al. \cite{XL-QJ-LW:18} proposed a model which can be regarded as an
extension to the one from Li et al. In this model, opinions can take $m$
different discrete values from a set $S$. Then, an individual will copy the
same opinion from a positive neighbor, but when facing a negative one, will
randomly select an opinion different from that neighbor from the set $S$.

%. Quoting the early work~\cite{CIH-OJH-MS:57}:
%\emph{``findings suggest that the relative distance between subjects'
%  attitudes and position of communication may be useful in explaining
%  apparently contradictory effects of communication in producing attitude
%  change in the intended direction, no change, and change in the opposite
%  direction."}. 

In this paper, we propose a novel opinion model over signed graphs. We
assume that the opinions are real numbers taking value in a closed interval
and each edge of the graph indicates the friendly or antagonistic
relationship between two individuals.  Our model is inspired by the
\emph{boomerang effect} studied in social
psychology~\cite{ARC:62,SB-PSH:09,RPA-JCM:67}, which aims to explain why in
some situations where two individuals engage in communication, they may not
end up being in a better agreement but rather their attitudes become more
dissentive, i.e., their opinions do not go in the \emph{intended direction}
(e.g., consensus or agreement) but in the \emph{opposite direction} (e.g.,
polarization). The early work~\cite{CIH-OJH-MS:57} suggested that this
phenomenon can be explained by \emph{``the relative distance between
  subjects' attitudes and position of communication"}. Our model is
motivated by the empirical observations in the social sciences (e.g., from
the study of interpersonal attraction~\cite{NJS-PBB:01}) that two friendly
agents will be closer in their attitudes and perspectives than two
unfriendly agents.  Specifically, we make the following assumption:
whenever two agents who have a positive relationship interact, they are
more agreeable and their opinions will become closer or even be in
consensus, i.e., the opinion \emph{changes in the intended direction}. On
the other hand, whenever two agents with a negative relationship interact,
the differences in their opinions will be more polarized after the
interaction because of their increasing disagreement, i.e., the opinion
\emph{changes in the opposite direction}. Our opinion model captures such
behavior mathematically, and we call it the \emph{affine boomerang
  model}. Mathematically, our proposed model is an affine model, which
makes it remarkably simple, and its dynamics are self-explanatory. Besides
a linear model like the discrete Altafini model, this is, arguably, the
next simplest model structurally.

Our second contribution is a formal analysis of our proposed
  model: under certain conditions on the sign structures of the network
that corresponds to structural balance, our model expresses opinion
polarization, i.e., the opinions of two groups converge to opposite extreme
values of the closed interval.

Finally, it is important to compare our model and the aforementioned models
in the literature. Our model has the property that opinions do not
necessarily vanish whenever the graph is not balanced, but, for example,
can continue fluctuating inside the closed interval. The vanishing
behavior, which we mentioned happens in both types of Altafini models and
their extensions, has been interpreted as if the agents in the network
become neutral or indifferent towards a specific topic.  In the case of
three antagonistic groups in a connected network, this would mean that all
groups will end up having a zero valued opinion, i.e., they will have
consensus on \emph{not having an opinion}. This might be difficult to
interpret. Instead, our proposed model predicts that two groups will
polarize their opinions and the third one will continue fluctuating its
opinions since its members observe people they dislike having opposite
opinions. Thus, this third group does not settle down to a definite opinion
and its members are persistently disagreeing with each
  other. This is, arguably, more intuitive since individuals of a social
network can always hold an opinion, independently of whether their network
is balanced or not. Moreover if we have an unbalanced network that differs
from a balanced one in just the sign of one edge, it is not clear why that
would drive the whole social network towards an indifferent
opinion. Instead, our model suggests that opinions may fluctuate around
extreme values of opinion, which is more intuitive since the underlying
social network is \emph{approximately} balanced.
%
%Finally, simulations show that under graphs resulting from small perturbations of balanced networks, the opinions tend to fluctuate around extreme values.
%
%
%Thus, we incorporate in our model two behavioral rules. The first behavior is that whenever two agents with an unfriendly relationship interact, the differences in their opinions should be more polarized after the interaction because of their increasing disagreement. On the other hand, the second behavior explains that whenever two agents who have a friendly relationship interact, their opinions should become closer or be in consensus since they come into an agreement. We propose an opinion model that captures such behavior mathematically.

%\clearpage
\section{The model}

A \emph{signed graph} $G$ is an undirected graph with signed edges, i.e.,
with edge weights equal to either $+1$ or $-1$.  Let
$\mcE=\mcE_+\union\mcE_-$ be the edge set of $G$, where $\mcE_+$ is the set
of positive edges and $\mcE_-$ the set of negative edges. $G$ is complete
when there exists an edge between any pair of vertices. A path
from vertex $i$ to $j$ in $G$ is a sequence of edges that connect a
sequence of distinct vertices starting from $i$ and finishing at $j$. % {\color{blue}(the path is called a \emph{cycle} when $i=j$)}{\color{green} [To Francesco: let us review this!]}. 
A
connected component is any subgraph such that all of its vertices
are connected to each other by paths, but they are not connected to any
other vertex of $G$. $G$ is connected whenever it has only one connected
component.  The abbreviation $i.o.$ stands for \emph{infinitely
  often}.

We model the structure of a social network composed by agents as a graph. Then, throughout the paper, we use the words \emph{graph} and \emph{network} interchangeably, as well as the terms \emph{vertex} and \emph{agent}. 
%models a social network in which each individual of the network is represented by a vertex in the graph.
%Consider $n$ agents in a signed social network $G$ that is undirected and has adjacency matrix $W\in\{1,0,-1\}^{n\times{n}}$. 
Each agent in the network holds an opinion about a particular statement of a discussion topic, and her opinion describes how much she agrees with it. An agent $i$ has an opinion $x_i\in[o_{\min},o_{\max}]$: $x_i=o_{\max}$ %$x_i\in[\alpha,\beta]$: $x_i=\beta$ 
whenever $i$ completely agrees with the statement being discussed, and $x_i=o_{\min}$ %$x_i=\alpha$ 
whenever she completely disagrees with it. The opinion vector $x\in[o_{\min},o_{\max}]^{n}$ %$x\in[\alpha,\beta]^{n\times{1}}$ 
has in its $i$th entry the opinion $x_i$ of agent $i$.

%Consider a social network with each of its members holding an opinion about a common discussion topic. A member or agent $i$ has an opinion $x_i\in[\alpha,\beta]$: $x_i=\alpha$ whenever she completely disagrees with the issue being discussed, and $x_i=\beta$ whenever she completely agrees with it.
%Any two agents that know each other can have a positive or a negative relationship. We expect that whenever two agents with an unfriendly relationship interact, the differences in their opinions should be more polarized after the interaction because of their increasing disagreement. On the other hand, when two agents who have a friendly relationship interact, their opinions should become closer or be in consensus since they come into an agreement. We propose an opinion model that captures such behavior mathematically.
%We assume we have $n$ agents in a network $G$, and that $G$ is a signed undirected network with adjacency matrix $W\in\{1,0,-1\}^{n\times{n}}$. The opinion vector $x\in[a,b]^{n\times{1}}$ has in its $i$th entry the opinion $x_i$ of agent $i$.%Now we propose our opinion model.

\begin{definition}[Sign arrangement property]
  Given a connected signed graph $G=(\until{n},\mcE_+\union\mcE_-)$ with $n\geq 3$, let
  $G_+=(\until{n},\mcE_+)$. For $k\in\N$, we say that $G$ satisfies the
  \emph{$k$-sign arrangement property} if
  \begin{enumerate}[label=(\roman*)]
  \item $G_+$ has $k\geq1$ connected components, and 
  \item each negative edge connects vertices belonging to different connected components of $G_+$.
  \end{enumerate}
  If this property holds, then each connected component of $G_+$ is a
  \emph{faction}.
\end{definition}

Based on the works~\cite{DC-FH:56,JD:67} in the sociological literature, we definite the notion of \emph{structural} and \emph{clustering balance} for connected graphs.

%In connection to the existing literature on sociology, we present two important configurations for a signed graph. 
%
% $G$ with the property that each node is belongs to a cycle.
% (i.e., $G$ is a $k$-signed arranged network, with $k\geq 2$). 
%
%We say $G$ satisfies \emph{structural balance} whenever all nodes can be partitioned in two groups such that each positive edge joins two nodes from the same partition and each negative edge joins nodes from different partitions; and we say that it satisfies \emph{clustering balance} whenever .
%
%Assume all nodes from $G$ can be partitioned in $m$ groups such that each positive edge joins two nodes from the same partition and each negative edge joins nodes from different partitions. We say that $G$ satisfies structural balance whenever $m\leq 2$, and that it satisfies structural balance whenever $m\geq 3$.
%
%\begin{definition}[Structural and clustering balance]
%  Consider a connected signed graph $G$ with $n\geq 3$ and in which each vertex is part of a cycle.
%  Assume the vertices of $G$ can be partitioned in $m$ groups such that
%  each positive edge joins two vertices from the same group and each
%  negative edge joins vertices from different groups. We say that $G$
%  satisfies 
%  \begin{enumerate}[label=(\roman*)]
%  \item \emph{structural balance}~\cite{DC-FH:56} if $m\leq 2$,  and 
%  \item \emph{clustering balance}~\cite{JD:67} if $m\geq 3$.
%  \end{enumerate}
%\end{definition}
\begin{definition}[Structural and clustering balance]
  Consider a connected signed graph $G$ with $n\geq 3$.
  Assume the vertices of $G$ can be partitioned in $m$ groups such that
  each positive edge joins two vertices from the same group and each
  negative edge joins vertices from different groups. We say that $G$
  satisfies 
  \begin{enumerate}[label=(\roman*)]
  \item \emph{structural balance} if $m\leq 2$,  and 
  \item \emph{clustering balance} if $m\geq 3$.
  \end{enumerate}
\end{definition}

%{\color{blue}In view of the previous definition, we will assume throughout this paper that any graph has three or more vertices.} 
The following result follows immediately from the previous definitions.

\begin{lemma}
%  Let $G$ be a $k$-signed arranged network %. , for $k\geq1$. 
%  {\color{blue}Let $G$ satisfy the $k$-sign arrangement property with each vertex connected to at least two other vertices. }
Let $G$ be a complete signed graph. $G$ satisfies the $k$-sign arrangement if and only if it satisfies structural balance when $k\leq 2$ or clustering balance when $k\geq 3$.   
%  {\color{green}and assume each vertex of $G$ has degree greater than or equal to two}. 
%Then:
%  \begin{enumerate}[label=(\roman*)]
%  \item if $k\leq 2$, then $G$ satisfies structural balance, and%\emph{structural balance} as in~\cite{DC-FH:56}, and
%  \item if {\color{blue}$k\geq 3$}, then $G$ satisfies clustering balance. %\emph{clustering balance} as in~\cite{JD:67}.
%\end{enumerate}
\end{lemma}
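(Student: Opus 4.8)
The plan is to show that, for a complete signed graph, the \emph{factions} appearing in the $k$-sign arrangement and the $m$ \emph{groups} appearing in the definition of (structural or clustering) balance are literally the same objects, so that necessarily $k=m$; the asserted equivalence then drops out by comparing the thresholds $m\leq 2$ and $m\geq 3$ in the balance definition.

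First I would prove the forward implication. Assume $G$ satisfies the $k$-sign arrangement and let $V_1,\dots,V_k$ be the vertex sets of the connected components of $G_+$. I claim this partition witnesses balance. By the definition of connected component, every edge of $\mcE_+$ has both endpoints in a single $V_r$, so each positive edge joins two vertices of the same group. Conversely, take any two vertices $u,v$ in the same $V_r$; since $G$ is complete there is an edge between them, and were it negative, property (ii) of the sign arrangement would force $u$ and $v$ into \emph{different} components of $G_+$, a contradiction. Hence every intra-group edge is positive, so there are no negative edges inside any $V_r$, i.e., every negative edge joins vertices from different groups. Thus $G$ is balanced with $m=k$ groups.

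Next I would prove the converse. Suppose $G$ is balanced with (nonempty) groups $V_1,\dots,V_m$. Since every negative edge joins vertices from distinct groups, no negative edge lies inside any $V_r$; as $G$ is complete, every pair of vertices of $V_r$ is therefore joined by a positive edge, so the subgraph of $G_+$ induced on $V_r$ is complete, hence connected. On the other hand, every positive edge stays within a single group, so no positive path connects vertices of different groups. Therefore $V_1,\dots,V_m$ are exactly the connected components of $G_+$, so $G_+$ has $k=m$ components, and by hypothesis each negative edge joins two different groups, i.e., two different components of $G_+$, which is property (ii). Hence $G$ satisfies the $k$-sign arrangement with $k=m$.

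Finally, combining the two directions, $G$ satisfies the $k$-sign arrangement if and only if it is balanced with $m=k$ groups; the regime $k\leq 2$ matches $m\leq 2$ (structural balance) and $k\geq 3$ matches $m\geq 3$ (clustering balance), which is the claim. I do not anticipate a genuine obstacle; the one place that needs a little care is the difference between a connected component of $G_+$ (defined via positive \emph{paths}) and a set of vertices that is positively complete, and completeness of $G$ is precisely what collapses this difference, through the argument that an intra-component edge cannot be negative.
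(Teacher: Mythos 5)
Your proof is correct and follows the natural route: the paper itself offers no argument (it states the lemma ``follows immediately from the previous definitions''), and your identification of the connected components of $G_+$ with the balance partition, using completeness to rule out negative intra-component edges, is exactly the verification being left implicit. The one point that genuinely needs care --- that completeness collapses ``positively path-connected'' to ``positively complete'' --- is the point you flag and handle correctly.
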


Note that a signed graph satisfying the $k$-sign arrangement property does
not need to be complete.%{\color{blue}, and thus does not need to satisfy structural or clustering balance}. %\textbf{{\color{green} Indeed, this is different from previous works on opinion dynamics over signed networks that mostly required some properties related to the ``completeness" of the underlying signed network (whether the underlying network is fixed or dynamic)~\cite{GS-CA-JB:19}.}}

\begin{definition}[Affine boomerang model]
  Let $G=(\until{n},\mcE_+\union\mcE_-)$ be a signed graph.
  Assume that each agent has an initial opinion $x_i(0)\in[o_{\min},o_{\max}]$, %$x_i(0)\in[\alpha,\beta]$,
  $o_{\min}<o_{\max}$, and a self-weight $a_i\in(0,1)$.  At each time step
  $t\in\Z_{\geq0}$, select randomly an edge of $G$; assume each edge
  $\{i,j\}$ has a time-invariant positive selection probability $p_{ij}$.  Update
  the opinions of the two agents $i$ and $j$ according to:
  \begin{equation}\label{f1}
    x_i(t+1)=\begin{cases}
    a_i x_i(t)+(1-a_i) x_j(t), \qquad \text{ if } \{i,j\}\in\mcE_+,\\
    a_i x_i(t)+(1-a_i) o_{\min}, \\ \qquad\qquad \text{ if } \{i,j\}\in\mcE_- \text{ and } x_i(t)<x_j(t),\\
    a_i x_i(t)+(1-a_i) o_{\max},  \\ \qquad\qquad \text{ if } \{i,j\}\in\mcE_- \text{ and } x_i(t)\geq x_j(t),
    \end{cases}
  \end{equation}
  and similarly for agent $j$. 
\end{definition}

%For simplicity, we assume that $a_i = 0.5$ for any agent $i$, and, in agreement with the convention mentioned in the introduction of this paper, that $\alpha=0$ and $\beta=1$. %We point out that theoretical results hold for more general settings with different values for $a_i$, $o_i$ and $o_f$ with few changes in the statements and almost no change in the derivation.
%All the following theoretical results can be extended whenever $a_i=c\in(0,1)$ for any agent $i$, as well as for other values of $\alpha$ and $\beta$. Moreover, we could also extend our results for the case in which, after selecting two agents, only one agent updates her opinion. %, but the calculations would be more heavily involved with extra notation.
%

Note that the boomerang effect is captured by the last two cases
of equation~\eqref{f1}. 

We remark that our model has \emph{asynchronous updating} of the opinions since only two opinions are updated simultaneously and independently per time step instead of all opinions at once (which would be \emph{synchronous updating}). This type of updating has been present in other previous opinion models, e.g., in the Deffuant-Weisbuch model~\cite{GC-WS-WM-FB:18n}.
%
%
%\begin{definition}[Random selection process]
%At each time step $t\in\Z_{\geq0}$, select randomly an edge of $G$; assume each edge $\{i,j\}$ has a time-invariant positive selection probability and is selected independently. 
%%
%%At any time $t>0$, one pair of agents $\{i,j\}\in\mcE$ is randomly chosen. This selection process is done independently at each time $t$ and any pair $\{i,j\}\in\mcE$ has a positive and constant probability $p_{ij}$ of being selected. 
%%
%A selection sequence is defined by the selected edges in a given time frame.
%%Given a time frame (finite or not), the selected pairs can define a selection sequence.
%%
%% can be selected for having their opinions updated. Every single pair of agents has a positive probability of being independently selected for updating their opinions; otherwise, they cannot be selected. 
%Let $\mathcal{S}$ denote the sample space of all possible selection sequences, $\mathcal{F}$ be the power set of $\mathcal{S}$, and $\prob$ be the probability for a sequence to happen, i.e., $\prob[f]=\prod_{\{i,j\}\in f}p_{ij}$ for any $f\in\mathcal{S}$ (since $\mathcal{S}$ is a countable set, this probability measure is well-defined). Thus, we define the probability space $(\mathcal{S},\mathcal{F},\prob)$.
%\end{definition}
%
An example of selecting edges for the opinion updating is to do it uniformly as follows: let $m$ be the number of edges in the graph (e.g., $m={{n}\choose{2}}$ for complete graphs), then we can assign to every pair of agents the same probability of being selected and have %, we have that 
$p_{ij}=1/m$ for any pair $\{i,j\}$.

\begin{remark}
  In our model, opinions take values on an arbitrary closed interval
  $[o_{\min},o_{\max}]$. From a sociological (and intuitive) point of view,
  it is plausible to have bounded opinions since there is no clear
  interpretation of a diverging opinion. Indeed, bounded opinions are
  present throughout the literature on opinion dynamics.  The case
  $o_{\min}=-\theta$ and $o_{\max}=\theta$, for $\theta>0$ is
  characteristic in the literature of bipartite consensus
  (e.g.,~\cite{CA:13,GS-CA-JB:19}), and the case $o_{\min}=0$ and
  $o_{\max}=1$ characterizes various works in the literature of opinion
  dynamics over graphs with positive weights (e.g.,~\cite{DA-AO:11}) or
  bounded-confidence models (e.g.,~\cite{GC-WS-WM-FB:18n}).
\end{remark}

\section{Model analysis}

\subsection{Theoretical results}
 \begin{theorem}[Consensus and polarization in signed graphs]  \label{th1}
  Consider a network satisfying the $k$-sign arrangement property. Consider
  the evolution of the affine boomerang model~\eqref{f1} with initial
  opinion vector $x(0)\in[o_{\min},o_{\max}]^n$.  Then
  \begin{enumerate}[label=(\roman*)]
  \item \label{l111} Consensus: if $k=1$, then, with
    probability one, $\lim_{t\to\infty}x(t)=c \vect{1}_n$,
    where $c$ is a random convex combination of the entries of
      $x(0)$.
  \item \label{l222} Polarization: if $k=2$, then, with
    probability one, $\lim_{t\to\infty}x_i(t)=o_{\min}$ for
    each agent $i$ of one of the two factions and
    $\lim_{t\to\infty}x_j(t)=o_{\max}$ for each $j$ of the
    other faction.
 \end{enumerate}
\end{theorem}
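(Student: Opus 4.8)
The plan is to reduce both cases to the asymptotics of products of random row-stochastic matrices. First note that every update in~\eqref{f1} replaces an opinion by a convex combination of opinions in $[o_{\min},o_{\max}]$ and of the constants $o_{\min},o_{\max}$, so $[o_{\min},o_{\max}]^n$ is invariant; by an affine change of variables I may assume $[o_{\min},o_{\max}]=[0,1]$.

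For part (i) ($k=1$): when $G_+$ has a single connected component its vertex set is all of $\until n$, so condition (ii) of the $k$-sign arrangement property forces $\mcE_-=\emptyset$, and~\eqref{f1} becomes pure pairwise averaging on the connected graph $G=G_+$. Writing $x(t+1)=P_t x(t)$ with $P_t$ the row-stochastic matrix determined by the selected edge $\{i,j\}$ (the identity except for rows $i,j$, which read $a_i,1-a_i$ and $a_j,1-a_j$), we have $x(t)=W(t)x(0)$ with $W(t)=P_{t-1}\cdots P_0$. Since $W(t+1)=P_tW(t)$ updates each column of $W(t)$ by convex combinations of its own entries, $t\mapsto\max_i W_{ij}(t)$ is non-increasing and $t\mapsto\min_i W_{ij}(t)$ is non-decreasing, so both converge. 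I would then invoke a standard weak-ergodicity/coupling argument for the almost sure vanishing of the gap $\max_i W_{ij}(t)-\min_i W_{ij}(t)$: over any window of $n-1$ steps, with probability at least $\big(\min_{\{i,j\}}p_{ij}\cdot\min_i\min(a_i,1-a_i)\big)^{n-1}>0$ a fixed path-following sequence of selections ``coalesces'' two copies of the column dynamics; iterating this estimate and using monotonicity of the (non-increasing) gap gives the claim. Hence $W_{ij}(t)\to\pi_j$ for all $i$, with $\pi\ge0$ and $\sum_j\pi_j=1$, so $x_i(t)=\sum_jW_{ij}(t)x_j(0)\to\sum_j\pi_jx_j(0)=:c$ for every $i$ — a random convex combination of the entries of $x(0)$.

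For part (ii) ($k=2$), let $C_1,C_2$ be the two factions and use the gauge transformation $y_i=x_i$ for $i\in C_1$, $y_i=1-x_i$ for $i\in C_2$, so that $y(t)\in[0,1]^n$ and polarization of $x$ is exactly the statement $y(t)\to\vect{0}_n$ or $y(t)\to\vect{1}_n$. A direct computation shows that an intra-faction (positive) edge averages $y$ just as it averages $x$, while an inter-faction (negative) edge $\{i,j\}$, $i\in C_1$, $j\in C_2$, moves both $y_i$ and $y_j$ (by the factors $a_i,a_j$) toward $1$ if $y_i+y_j>1$ and toward $0$ if $y_i+y_j<1$ (the borderline $y_i+y_j=1$ is the sole exceptional case, to be handled separately). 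In $x$-coordinates this says: once the factions are \emph{strictly separated}, say $\max_{i\in C_1}x_i(t)<\min_{j\in C_2}x_j(t)$, every inter-faction firing keeps its $C_1$-endpoint below its $C_2$-endpoint and so pushes the former toward $o_{\min}$ and the latter toward $o_{\max}$; combined with intra-faction averaging this makes $\max_{i\in C_1}x_i(\cdot)$ non-increasing, $\min_{j\in C_2}x_j(\cdot)$ non-decreasing, and strict separation preserved for all later times. The rest of the argument is then a contraction estimate as in part (i), but with a negative edge acting as a ``leak'': since $G$ is connected and $k=2$, $C_1$ contains a vertex $p$ incident to a negative edge, and over a window of bounded length — fire $\{p,\cdot\}$, then average around the connected graph $G_+$ restricted to $C_1$, repeat — one has, with probability bounded below, that $\max_{i\in C_1}x_i$ is multiplied by some $\rho_1<1$ depending only on the graph and the $a_i$. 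Thus $\max_{i\in C_1}x_i(t)\to o_{\min}$ a.s. and, symmetrically, $\min_{j\in C_2}x_j(t)\to o_{\max}$ a.s.; since $o_{\min}\le x_i(t)\le\max_{i\in C_1}x_i(t)$ on $C_1$ and $\min_{j\in C_2}x_j(t)\le x_j(t)\le o_{\max}$ on $C_2$, this is the claimed polarization (with the two factions' roles swapped if the separation ran the other way).

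The hard part will be to show that almost surely the two factions \emph{do} become strictly separated after finitely many steps. Heuristically, each inter-faction firing already amplifies and sign-locks the gap across that edge (if $x_p>x_q$ the firing raises $x_p$ and lowers $x_q$, and the strict inequality persists), while intra-faction averaging along the connected graphs induced on $C_1$ and $C_2$ forces each faction to become internally coherent in the limit; and when $k=2$ — every negative edge between the same pair of factions, hence no odd sign-cycle — these two mechanisms cannot coexist with a persistently ``frustrated'' configuration, which is precisely why the analogous statement fails for $k\ge3$ (clustering balance) and the opinions fluctuate instead. Turning this into a proof requires identifying the correct almost-monotone (or absorbing) functional of $y(t)$, since the natural candidates are not monotone: $\max_i y_i$ can increase (a push toward $1$) while $\min_i y_i$ can decrease (a push toward $0$), and the whole content of part (ii) is that these cannot both happen infinitely often. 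One must also discard a measure-zero set of degenerate configurations that are never forced to separate — exactly the fixed points of~\eqref{f1} of the form $x\equiv o_{\max}\vect{1}_n$ and $x\equiv o_{\min}\vect{1}_n$, which the process can reach or approach only if every relevant negative-edge firing is an exact tie, an event of probability zero for non-degenerate $x(0)$.
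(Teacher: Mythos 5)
Part (i) of your proposal is fine and is essentially the paper's argument (the paper writes the same product of random row-stochastic matrices and delegates the coalescence estimate to a cited consensus theorem, whereas you sketch the coupling/contraction bound by hand). For part (ii), your absorbing-set analysis is also correct and matches the paper: once $\max_{i\in C_1}x_i<\min_{j\in C_2}x_j$, separation is preserved, the two extremal quantities are monotone, and positive-probability blocks of inter-faction firings force convergence to $o_{\min}$ and $o_{\max}$.

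The genuine gap is exactly the step you flag yourself: you never prove that the process almost surely reaches a strictly separated configuration in finite time, and without that step the theorem is not proved, since generic initial conditions lie outside the absorbing set. Moreover, your stated strategy for closing it --- finding an ``almost-monotone functional'' of $y(t)$ --- is not how this is done and is likely a dead end, precisely because (as you observe) no natural functional is monotone before separation. The paper instead uses a purely combinatorial-probabilistic device: a \emph{finite-time proximity lemma} asserting that there exists one \emph{deterministic finite sequence of edges} (built by concatenating all-positive intra-faction paths, which coalesce opinions within each faction, with repeated firings of negative edges, which drive the two endpoints toward opposite extremes) whose sequential selection drives the state from any unseparated configuration into the separated set $Z^{-1}(2)$. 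Since every fixed sequence of $\tau$ edges is selected with probability at least $(\min_{\{i,j\}}p_{ij})^{\tau}>0$ by independence of the edge draws, the probability of avoiding $Z^{-1}(2)$ over $\Gamma$ disjoint windows of length $\tau+1$ is at most $(1-\gamma_o^{\tau})^{\Gamma}$, and Borel--Cantelli gives almost-sure entry in finite time; combined with the absorbing property this finishes the proof. No Lyapunov-type monotonicity before separation is needed, and no measure-zero set of ``never-separating'' configurations has to be excised --- the positive-probability escape route exists from \emph{every} unseparated state. To repair your write-up you would need to supply this lemma (or an equivalent uniform positive lower bound on the probability of separating within a bounded number of steps from an arbitrary state) rather than the heuristic about frustrated configurations.
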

\begin{proof}
Formally, at any time step, the selected edge is a discrete random variable over some probability space $(\Omega',\mathcal{F}',{\prob}')$ with $\Omega'$ being the set of all edges on the graph, $\mathcal{F}'$ the power set, and %${\prob}'$ defined by %the usual probability measure, i.e., 
${\prob}'[\{i,j\}]=p_{ij}$. Let $\omega(t)$ be the random edge selected at time $t$, then, the collection of random variables $\setdef{\omega(t)}{t\in\Z_{\geq 0}}$ forms a stochastic process of an independent sequence of random variables. Then, an adequate probability space $(\Omega,\mathcal{F},\prob)$ can be constructed with $\Omega=\prod_{t\in\Z_{\geq 0}}\Omega'$, $\mathcal{F}$ being the product of $\sigma$-algebras $\mathcal{F}'$ over $t\in\Z_{\geq 0}$, and $\prob$ being the product probability measure $\prod_{t\in\Z_{\geq 0}}{\prob}'$. Therefore, given the sequence of edges $\{s(t)\}_{t\in S}$ with some finite set $S\subset\Z_{\geq 0}$, $\prob[\setdef{\omega\in\Omega}{\omega(t)=s(t),t\in S}]=\prod_{t\in S}{\prob}'[s(t)]$. %over $t\in\Z_{\geq 0}$.

We start by considering the case $k=1$. In this case,
  the model is a linear system of the form $x(t+1)=W(t)x(t)$, where
  $W(t)$ is a random matrix that takes, at each time step, the value
  $W_{ij}=I_{n\times{n}}-(1-a_i)e_i(e_i-e_j)^\top-(1-a_j)e_j(e_j-e_i)^\top$
  whenever the edge $\{i,j\}$ is selected to be updated with
  probability $p_{ij}$ (here, $e_i$ is the $i$th column of the identity matrix $I_{n\times{n}}$). With probability one, $W(t)$ is a row
  stochastic matrix with a strictly positive diagonal for any $t$;
  moreover $W(t)$ is independent and identically distributed for any
  $t$. Thus, $\E[W(t)]$ (with respect to ${\prob}'$) %$\bar{W}=\color{blue}{\sum_{\{i,j\}\in \mcE_+\union\mcE_-}p_{ij}W_{ij}}$ %$\E[W(t)]=\bar{W}$ 
  is a row stochastic matrix that, when
  interpreted as an adjacency matrix, corresponds to a connected undirected
  network.  Under these assumptions,~\cite[Theorem 12.1]{FB:18} implies
  the first statement of the theorem.

%The case $k=1$ has been proved in~\cite{DA-AO-AP:10}, so we prove the case $k=2$. Observe that any edge path from vertex $i$ to vertex $j$ is a selection sequence (but not all selection sequences are edge paths). Recall that edge paths are acyclic, i.e., no vertex can appear more than two times in the sequence.

Now we prove the case $k=2$. 
We present %Then, 
the following lemma (whose proof is in the appendix) which describes the \emph{finite-time proximity property}:
\begin{lemma}[Finite-time proximity property]
\label{lemin1}
      Consider the same assumptions as in Theorem~\ref{th1}
      with a network satisfying the $2$-sign arrangement property.  There
      exists a finite sequence of edges such that, if they are
        selected sequentially by our affine boomerang model, then, inside
      the interval $[o_{\min},o_{\max}]$, the opinions of any two vertices
      become arbitrarily close if they belong to the same faction, or
      arbitrarily apart if they belong to different ones.
\end{lemma}
\noindent
%If $s_o\in\mathcal{S}$ is a finite selection of pairs of vertices that is obtained as the result of the finite-time proximity property, then, since it is finite, $\prob[s_o]>0$.
 %We call this the \emph{finite-time proximity property}. %Moreover, any finite updating selection of pair of vertices occur with positive probability.
%%
%there is always a positive probability of selecting any pair of vertices connected in $G$, then we can always find a path...
%
%
%{\color{blue}Note, however, that for a sequence $s_o\in\mathcal{S}$ that} has infinite elements (i.e. an infinite sequence), then $\prob[s_o]=0$, i.e., the probability of $s_o$ happening vanishes. Then, if $S_o$ is a (countable) set of infinite sequences, $\prob[S_o]=\sum_{s_o\in S_o}\prob[s_o]=0$.
%
%Observe that %with probability one, $x(t)\in{\color{blue}[o_{\min},o_{\max}]^{n}}$ for any $t$. 
%
%Then, 
Now, for any opinion vector $x\in[o_{\min},o_{\max}]^{n}$, we define the variable $Z:[o_{\min},o_{\max}]^n\to\{1,2\}$ as %follows
\begin{enumerate}[label=\textup{(C\arabic*)}]%[label=(\roman*)]
\item $Z(x)=1$ when there is no value $\tau>0$ such that one faction has all of its opinions above $\tau$ and the other faction has them equal or below it; \label{22}
\item $Z(x)=2$ when there exists a value $\tau>0$ such that one faction has all of its opinions above $\tau$ and the other faction has them equal or below it. \label{23}
\end{enumerate}
Clearly, $Z$ exhausts all possible situations for the values of the opinion vector $x$, and, moreover, induces a partition over the set $[o_{\min},o_{\max}]^n$:  $[o_{\min},o_{\max}]^n=\cup_{m=1}^2Z^{-1}(m)$ and $Z^{-1}(1)\cap Z^{-1}(2)=\emptyset$.

Now, let us remark that, from the random selection process of the edges, it immediately follows that $\{x(t)\}_{t>0}$ is a random process over the probability space $(\Omega,\mathcal{F},\prob$); and, moreover, it is a Markov process, i.e., $\prob[x(t)\in Z^{-1}(M)\,|\,x(t-1)=c_{t-1},\dots,x(0)=c_o]=\prob[x(t)\in Z^{-1}(m)\,|\,x(t-1)=c_{t-1}]$ for any $m\in\{1,2\}$. Observe that, with probability one, $x(t)\in[o_{\min},o_{\max}]^{n}$ for any $t$ since $x(0)\in[o_{\min},o_{\max}]^{n}$.
%that {\color{blue} the events $\{Z(x)=1\}, \{Z(x)=2\}, \{Z(x)=3\}$ are disjoint.} % \ref{21}-\ref{23} are independent events.}

Now, assume that $x(t)\in Z^{-1}(2)$ for some $t<\infty$. Let $F_1$ be the faction such that $x_i(t)\leq\tau$ for any $i\in F_1$; and $F_2$ the one such that $x_i(t)>\tau$ for any $i\in F_2$. Let $\theta_{F_1}(t)=\max_{i\in {F_1}}x_i(t)$ and $\theta_{F_2}(t)=\min_{i\in {F_2}}x_i(t)$. If at $t+1$ some $i\in F_1$ and $j\in F_2$ are selected, we have that $x_i(t+1)< x_i(t)$ and $x_j(t+1)>x_j(t)$. 
%
%If on the other hand, at $t^*+1$ some $i,j\in A$ are selected, we have that either $x_j(t^*)\leq x_i(t^*+1)\leq x_i(t^*)$ and $x_i(t^*)\leq x_i(t^*+1)\leq x_j(t^*)$, or $x_i(t^*)\leq x_j(t^*+1)\leq x_j(t^*)$ and $x_j(t^*)\leq x_j(t^*+1)\leq x_ji(t^*)$.
%
On the other hand, if at $t+1$ both $i$ and $j$ belong to the same faction with $x_i(t)\leq x_j(t)$, we have that $x_i(t)\leq x_i(t+1),x_j(t+1)\leq x_j(t)$, with equality if and only if $x_i(t)=x_j(t)$. From these two observations it is easy to show that $\theta_{F_1}(t+1)\leq\theta_{F_1}(t)$ with probability one; i.e., $\{\theta_{F_1}(s)\}_{s\geq t}$ is a non-decreasing sequence which is lower bounded by $o_{\min}$. This implies convergence of $\{\theta_{F_1}(s)\}_{s\geq t}$ to some lower bound $c_{\min}$ with probability one. %, which we claim $c_{\min}=o_{\min}$. This follows easily from the fact that
Now, for any $\epsilon>0$ and $t^*\geq t$, there exists some finite $T>0$ such that if the sequence of edges $\{(\theta_{F_1}(s),k(s))\}_{s=t^*}^{t^*+T}$ with $k(s)\in F_2$ for $t^*\leq s\leq t^*+T$ is selected, then $|\theta_{F_1}(t^*+T)-o_{\min}|<\epsilon$. Such sequence has a positive probability of being selected sequentially by the affine boomerang model %exists with positive probability 
for any $t^*$, from which it follows that $c_{\min}=o_{\min}$.
%
%If this is not the case, then $c_{\min}>o_{\min}$. However, we can choose a finite sequence of edges $\{(\theta_{F_1}(s),k(s))\}_{s=t+1}^{t+T}$ for some sufficiently large $T\geq 1$ and $k(s)\in F_2$ with $t\leq s\leq t+T$, so that $\theta(t+T)$ becomes arbitrarily close to $o_{\min}$. Since this occurs with positive probability for any $t$, it gives a contradiction and proves our claim.
%
%
%%$\prob[\theta_{F_1}(t+1)\leq\theta_{F_1}(t)]=1$; 
%and, moreover, the finite-time proximity property let us conclude that, for any $t^*\geq t$, there is a positive probability that $\theta_{F_1}(t^*+\xi)<\theta_{F_1}(t^*)\text{ for some }\xi>0$ %$\prob[\theta_{F_1}(t^*+\xi)<\theta_{F_1}(t^*)\text{ for some }\xi>0]>0$ 
%(unless we are in the trivial case $\theta_{F_1}(t)=o_{\min}$). This decreasing monotonic behavior of $\{\theta_{F_1}(s)\}_{s\geq t}$ {\color{blue}implies convergence to some lower bound $c_{\min}$ for $\theta_{F_1}(t)$ with probability one. We claim $c_{\min}=o_{\min}$. If this is not the case, then $c_{\min}>o_{\min}$. However, we can choose a finite sequence of edges $\{(\theta_{F_1}(s),k(s))\}_{s=t+1}^{t+T}$ for some sufficiently large $T\geq 1$ and $k(s)\in F_2$ with $t\leq s\leq t+T$, such that $\theta(t+T)$ becomes arbitrarily close to $o_{\min}$. Since this occurs with positive probability, it gives a contradiction and proves our claim.}
Therefore, %let us prove the 
there is polarization for any $i\in F_1$ towards $o_{\min}$. A similar reasoning leads to the proof that $\{\theta_{F_2}(s)\}_{s\geq t}$ has an analogous increasing monotonic behavior and thus that there is polarization for $i\in F_2$ towards $o_{\max}$ with probability one. 
In conclusion, if $x(t)\in Z^{-1}(2)$ for $t\geq 0$, then polarization occurs with probability one and we say that $Z^{-1}(2)$ is an \emph{absorbing set} since the opinion vector cannot escape from it once it enters this set.

Therefore, to finish the proof of the theorem, we only need to prove that, given $x(t)\in Z^{-1}(1)$ at any time $t$, there always exists (with probability one) a finite sequence of edges %with (random) length $T$ 
such that eventually $x(t^*)\in Z^{-1}(2)$ for some $t<t^*<\infty$. Then, since any finite %selection 
sequence of edges has positive probability of being selected sequentially by the affine boomerang model and $Z^{-1}(3)$ is an absorbing set, 
% $\prob[x(t)\in Z^{-1}(3)\text{ for all }t>0|x(0)=x_o]=1$ for any $x_o\in Z^{-1}(3)$, 
it follows %from the Borel-Cantelli lemma 
that $\prob[x(t)\in Z^{-1}(1)\; i.o.\,|\,x(0)\in Z^{-1}(1)]=0$;
%$\prob[x(t)\in Z^{-1}(1)\cup Z^{-1}(2)\; i.o.\,|\,x(0)\in Z^{-1}(1)\cup Z^{-1}(2)]=0$; 
and this finishes the proof for item~\ref{l222} of the theorem. 
%Now, since $\prob[x(t+1)\in Z^{-1}(3)\,|\,x(t)\in Z^{-1}(1)]=0$, it follows by induction that, given any $T>0$, the event $\{x(t+T)\in Z^{-1}(3),x(t+\xi)\notin Z^{-1}(2)$, for any $0<\xi<T|x(t)\in Z^{-1}(1)\}$ has zero probability. %Moreover, due to the finite-time proximity property, we have that $\prob[x(t)\in Z^{-1}(1) \text{ for some } t>0|x(0)\in Z^{-1}(2)]>0$.} 
Therefore, it suffices to prove that
%\begin{equation}
%\label{p1}
%\prob[x(t+T)\in Z^{-1}(2)\text{ for some } T>0\,|\,x(t)=x_o]=1 \text{ for any }x_o\in Z^{-1}(1).
%\end{equation}
$\prob[x(t+T)\in Z^{-1}(2)\text{ for some } T>0\,|\,x(t)=x_o]=1$ for any $x_o\in Z^{-1}(1)$.
%Indeed, until now, it is known that both probability expressions in~\ref{p1} and~\ref{p2} are positive due to the finite-time proximity property, but we need them to be equal to $1$.
%% we have that $\prob[x(t)\in Z^{-1}(1) \text{ for some } t>0|x(0)\in Z^{-1}(2)]>0$.
% $P(Z(t+\Delta t_1)=2|Z(t)=1)=1$ and $P(Z(t+\Delta t_2)=3,Z(t+\epsilon)\notin\{1,2\}\text{ with }0<\epsilon<\Delta t_2|Z(t)=2)=1$.
%$P(Z(t+\Delta t_1)=2|Z(t)=1)>0$ and $P(Z(t+\Delta t_2)=3,Z(t+\epsilon)\neq 1\text{ with }0\leq\epsilon<\Delta t_2|Z(t)=2)>0$.
%\noindent
%
So, let us fix any $x_o\in Z^{-1}(1)$. Let $\mathcal{T}_{1\to 2}(t)=\inf\{t^*>t:x(t^*)\in Z^{-1}(2)|x(t)=x_o\}$ be the first time, after starting in $x_o\in Z^{-1}(1)$ at time $t$, at which the opinion vector enters the set $Z^{-1}(2)$. If we show that $\prob[\mathcal{T}_{1\to 2}(t)<\infty]=1$ for any $t$, then we have finished the proof. 

By the Markov property, %Since $\{x(t)\}_t$ is a Markov process,
%Let $T(t)=\inf\{t^*>t:Z(x(t^*))=2|Z(x(t))=1, x(t)=x_o\}$. If we show that $\prob[T(t)<\infty]=1$ for any $t$, then we have proved~\ref{p1}. Since $\{x(t)\}$ is a Markov process, 
%
%Due that any change of the opinions at $t+1$ depends only on the values at $t$ and the agents are independently selected at each iteration, 
we only need to show that $\prob[\mathcal{T}_{1\to 2}(0)<\infty]=1$. We start by noticing that, by the finite-time proximity property, there exists a sequence of edges $s(0),\dots,s(\tau-1)$ for some $\tau>0$ such that $x(\tau)\in Z^{-1}(2)$. Let $\gamma_o:=\min_{\{i,j\}\in\mcE}p_{ij}$. Then,
%\begin{align*}
%&\prob[x(\tau)\in Z^{-1}(2)|x(0)=x_o]\\
%&\geq\prob[\cap_{\ell=0}^{\tau-1}\{\omega_{\ell}=s(\ell)\}|x(0)=x_o]\\
%&=\prob[\omega_0=s(0)|x(0)=x_o]\\
%& \prob[\cap_{\ell=1}^{\tau-1}\{\omega_{\ell}=s(\ell)\}|x(0)=x_o,\omega_0=s(0)]\\
%&=\prob[\omega_0=s(0)|x(0)=x_o]\\
%& \prob[\omega_{1}=s(1)\}|x(0)=x_o,\omega_0=s(0)]\\
%& \prob[\cap_{\ell=1}^{\tau-1}\{\omega_{\ell}=s(\ell)\}|x(0)=x_o,\omega_\ell=s(\ell)\text{ for }\ell\in\{0,1\}]\\
%%&=\prob[\omega_0=s(0)|x(0)=x_o]\dots\\
%%& \prob[\omega_{1}=s(1)|x(0)=x_o,\omega_0=s(0)]\\
%%& \prob[\omega_{\tau-1}=s(\tau-1)\}|x(0)=x_o,\omega_\ell=s(\ell)\text{ for }\ell\in[0,\tau-2]]\\
%\end{align*}
%and so
\begin{equation}
\label{unin}
\begin{split}
\prob[&x(\tau)\in Z^{-1}(2)|x(0)=x_o]\\
&\geq\prob[\omega(0)=s(0)|x(0)=x_o]\\
&\phantom{= }\;\times\prob[\omega(1)=s(1) \,|\, x(0)=x_o,\omega(0)=s(0)]\dots\\
&\phantom{= }\;\times\prob[\omega(\tau-1)=s(\tau-1)\}\,|\\
&\qquad\qquad x(0)=x_o,\omega(\ell)=s(\ell)\text{ for } \ell\in[0,\tau-2]]\\
%&={\prob}'[\omega_0=s(0)]{\prob}'[\omega_1=s(1)]\dots{\prob}'[\omega_{\tau-1}=s(\tau-1)]\\
&={\prob}'[s(0)]{\prob}'[s(1)]\dots{\prob}'[s(\tau-1)]\\
%&\geq\left(\min_{\{i,j\}\in\mcE} p_{ij}\right)^{\tau}
&\geq(\gamma_o)^{\tau},
\end{split}
\end{equation}
where the first inequality comes from a repetitive application of the conditional probability and the following equality comes from the independence of the underlying stochastic process. %Then, we get that $\prob[x(\tau)\in Z^{-1}(2)|x(0)=x_o]\geq \gamma^{\tau}$.
Let $\Gamma>0$ be any integer and $A_{\ell}=\{x(t)\notin Z^{-1}(2),t\in[\ell,\ell+\tau]\}$, then $\prob[A_{0}|x(0)=x_o]\leq 1-\gamma_o^\tau$. Likewise, in a  way similar to how we obtained expression~\eqref{unin}, we compute
\begin{equation*}
%\label{unin1}
\begin{split}
\prob[&\mathcal{T}_{1\to 2}(0)\geq(\tau+1)\Gamma]\\
&=\prob[x(t)\notin Z^{-1}(2),t\in[0,(\tau+1)\Gamma-1]|x(0)=x_o]\\
&=\prob[\cap_{\ell=0}^{\Gamma-1}A_{\ell(\tau+1)}|x(0)=x_o]\\
&=\prob[A_{0}|x(0)=x_o]\\
&\phantom{=} \quad \times \prod_{\ell=1}^{\Gamma-1}\prob[A_{\ell(\tau+1)}|x(0)=x_o,\cap_{0\leq \ell'\leq \ell}A_{\ell'(\tau+1)}]\\
%&\geq\left(1-\left(\min_{\{i,j\}\in\mcE} p_{ij}\right)^{\tau}\right)^\Gamma=:\gamma^\Gamma.
&\leq(1-\gamma_o^{\tau})^\Gamma=:\gamma^\Gamma.
\end{split}
\end{equation*}
%Then, $\prob[\mathcal{T}_{1\to 2}(0)\geq(\tau+1)\Gamma]\leq \gamma^\Gamma$.%, and thus $\prob[T(0)\geq t]\leq\prob[T(0)\geq\lfloor \frac{t}{\tau+1}\rfloor(\tau+1)]\leq (\gamma')^{\lfloor \frac{t}{\tau+1}\rfloor}$. 
Now, we observe that $\sum_{t=1}^{\infty}\prob[\mathcal{T}_{1\to 2}(0)\geq(\tau+1)t]\leq\sum_{t=1}^{\infty}\gamma^t=\frac{\gamma}{1-\gamma}<\infty$ because of geometric series since $0<\gamma<1$. Then, by the first Borel-Cantelli lemma, we conclude that $\prob[\mathcal{T}_{1\to 2}(0)<\infty]=1$. This concludes the proof.% for case~\ref{p1}. 
%
%Now, since we have proven that $\mathcal{T}_{1\to 2}(t)<\infty$ with probability one, we can prove case~\ref{p2} in a similar way by fixing any $x_o\in Z^{-1}(2)$, letting $\mathcal{T}_{2\to 3}(t)=\inf\{t^*>t:x(t^*)\in Z^{-1}(3),x(t+\xi)\notin Z^{-1}(1)\text{ for any }0<\xi<t^*|x(t)=x_o\}$, and thus showing $\prob[\mathcal{T}_{2\to 3}(0)<\infty]$.
%
\end{proof}

A consequence of Theorem~\ref{th1} is that a complete social network that satisfies
structural balance with %complete social network with structural balance and 
two factions ends up having its agents with totally
opposite opinions. This agrees with the intuitive result that antagonistic
groups are expected to develop polarized opinions, as shown by other models in the
literature~\cite{YL-WC-YW-ZZ:15,JL-XC-TB-MAB:17}. Also, as expected, if
there are no negative relationships between the agents (i.e., there is only
one faction), all agents reach consensus.
\begin{lemma}[Fluctuations]  \label{th2}
  Consider a network satisfying the $k$-sign arrangement property with
  $k\geq3$ factions $\{F_1,\dots,F_k\}$ and such that there
  exists at least one negative edge between any pair of
  factions.  Consider the boomerang opinion dynamics model~\eqref{f1}
  with $x_i(0)=o_{\min}$ for any $i\in F_1$,
  $x_i(0)=o_{\max}$ for any $i\in F_2$, and
  $x_i(0)\in(o_{\min},o_{\max})$ for any $i\in F_k$, $k\geq3$.  Then,
  for any $0<\epsilon<({o_{\max}-o_{\min}})/{2}$ and any
    $i\in F_k$, $k\geq3$,
  \begin{equation*}
    \prob[x_i(t)\in(o_{\min},o_{\min}+\epsilon)\cup(o_{\max}-\epsilon,o_{\max})\textup{ i.o.}]=1.
  \end{equation*} 
\end{lemma}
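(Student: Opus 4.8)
The plan is to follow the architecture of the proof of Theorem~\ref{th1}: first isolate an invariant region for the dynamics, then prove a deterministic reachability statement with a step count that is \emph{uniform} over that region (an analogue of the finite-time proximity property, Lemma~\ref{lemin1}), and finally upgrade a per-window positive probability into an ``infinitely often'' conclusion. I would begin by recording the invariant: under the stated initial condition, for every realization of the edge process and every $t$ we have $x_\ell(t)=o_{\min}$ for all $\ell\in F_1$, $x_\ell(t)=o_{\max}$ for all $\ell\in F_2$, and $x_\ell(t)\in(o_{\min},o_{\max})$ for all $\ell$ in any of $F_3,\dots,F_k$. This is a short case check on~\eqref{f1} using the $k$-sign arrangement property (positive edges stay inside a faction, negative edges join distinct factions) and that each $a_\ell\in(0,1)$: a positive edge inside $F_1$ (resp.\ $F_2$) keeps both endpoints at $o_{\min}$ (resp.\ $o_{\max}$); on a negative edge incident to $F_1$ the $F_1$-endpoint is the smaller one and is mapped to $a_\ell o_{\min}+(1-a_\ell)o_{\min}=o_{\min}$, symmetrically for $F_2$; and any update of a vertex of $F_3,\dots,F_k$ is an affine combination placing positive weight $a_\ell$ on its current interior value, hence stays strictly inside $(o_{\min},o_{\max})$.

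Next I would prove the key deterministic claim: \emph{for every $\epsilon>0$ there is a finite $T=T(\epsilon)$ such that from any configuration $x$ satisfying the invariant there is a sequence of at most $T$ edges whose sequential selection by~\eqref{f1} drives $x_i$ into $(o_{\min},o_{\min}+\epsilon)$.} Let $F\notin\{F_1,F_2\}$ be the faction containing $i$. By hypothesis there is a negative edge $\{u,v\}$ with $u\in F$ and $v\in F_2$, and since $F$ is a connected component of $G_+$ there is a positive path $u=w_0,w_1,\dots,w_m=i$ inside $F$, with $m\le n-1$. The base procedure $P_0(\epsilon_0)$ selects $\{u,v\}$ repeatedly: since by the invariant $x_u<o_{\max}=x_v$, each such step boomerangs $x_u$ toward $o_{\min}$, i.e.\ $x_u-o_{\min}\mapsto a_u(x_u-o_{\min})$, while $x_v$ stays at $o_{\max}$; hence $\lceil\log_{\bar a}(\epsilon_0/\Delta)\rceil$ steps bring $x_u$ within $\epsilon_0$ of $o_{\min}$, where $\bar a:=\max_\ell a_\ell<1$ and $\Delta:=o_{\max}-o_{\min}$ --- \emph{uniformly in $x$}. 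I then propagate along the path by a nested ``anchor-and-average'' recursion: given a procedure $P_j$ that (uniformly) drives $x_{w_j}$ within any prescribed distance of $o_{\min}$, define $P_{j+1}(\epsilon)$ to alternate, $\lceil\log_{\bar a}(\epsilon/(2\Delta))\rceil$ times, between running $P_j(\epsilon/2)$ and selecting the positive edge $\{w_j,w_{j+1}\}$; the latter sends $x_{w_{j+1}}-o_{\min}$ to the convex combination $a_{w_{j+1}}(x_{w_{j+1}}-o_{\min})+(1-a_{w_{j+1}})(x_{w_j}-o_{\min})$ with $x_{w_j}-o_{\min}\le\epsilon/2$, so a short estimate for this recursion shows $P_{j+1}(\epsilon)$ drives $x_{w_{j+1}}$ to within $\epsilon$ of $o_{\min}$. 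Crucially $P_j$ only ever touches the vertices $\{v,w_0,\dots,w_j\}$, hence leaves $x_{w_{j+1}},\dots,x_i$ unchanged; therefore running $P_m(\epsilon)$ drives $x_i=x_{w_m}$ into $[o_{\min},o_{\min}+\epsilon)$, and $x_i>o_{\min}$ by the invariant. Its length $T:=|P_m(\epsilon)|$ is finite and independent of $x$, since $m$ is fixed and each recursion level only multiplies the step count by a factor depending on $\epsilon,\bar a,\Delta$.

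The probabilistic conclusion would then be routine. Put $\gamma_o:=\min_{\{\ell,\ell'\}\in\mcE}p_{\ell\ell'}>0$. Since the edge process is i.i.d.\ and independent of everything, the Markov property together with the deterministic claim gives, for every $n$ and with probability one (as $x(nT)$ then satisfies the invariant),
\[
\prob\bigl[\,x_i(s)\in(o_{\min},o_{\min}+\epsilon)\text{ for some }s\in(nT,(n+1)T]\ \bigm|\ \mathcal{F}_{nT}\,\bigr]\ \ge\ \gamma_o^{\,T}\ =:\ \delta\ >\ 0 .
\]
Hence $\sum_{n\ge0}\prob[E_n\mid\mathcal{F}_{nT}]=\infty$ almost surely, where $E_n$ is the event that $x_i$ visits $(o_{\min},o_{\min}+\epsilon)$ during $(nT,(n+1)T]$; by the conditional (L\'evy) Borel--Cantelli lemma $E_n$ occurs infinitely often almost surely, which yields $\prob[\,x_i(t)\in(o_{\min},o_{\min}+\epsilon)\ \textup{i.o.}\,]=1$ and a fortiori the stated claim. (Avoiding L\'evy's lemma, one may instead iterate $\prob[\bigcap_{n<N}E_n^{\mathrm c}\mid x(0)]\le(1-\delta)^N$ to see that $x_i$ revisits $(o_{\min},o_{\min}+\epsilon)$ after any prescribed time almost surely, and then intersect over countably many starting times.)

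The main obstacle is the deterministic claim, specifically the \emph{uniformity} of the step bound. Unlike driving $x_u$ down against the frozen extreme $o_{\max}$, pulling an interior vertex $w_{j+1}$ down across a positive edge simultaneously drags its neighbor $w_j$ back up, so $x_u$ --- and hence each $x_{w_j}$ in succession --- must be re-anchored near $o_{\min}$ before every averaging step; bounding the residual error this leaves behind, and its accumulation along a path of bounded but otherwise arbitrary length, is exactly what the nested recursion above is designed to control.
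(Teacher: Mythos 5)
Your proposal is correct, and its overall architecture (invariant region, finite-time deterministic reachability with positive per-window probability, Borel--Cantelli) is the same skeleton the paper uses; the paper's own proof freezes $F_1$ and $F_2$ at the extremes, observes that selecting a negative edge $\{i,j\}$ with $j\in F_1$ (resp.\ $j\in F_2$) pushes $x_i$ toward $o_{\max}$ (resp.\ $o_{\min}$), and then invokes ``a reasoning similar to the proof of Theorem~\ref{th1}'' on the stopping times $\tau_{c\to\ell},\tau_{\ell\to u},\tau_{u\to\ell}$. Where you genuinely depart from (and improve on) the paper is the deterministic reachability step. The paper's argument, as written, only covers agents $i$ that are themselves incident to a negative edge into $F_1$ and into $F_2$, whereas the hypothesis only guarantees one such negative edge \emph{somewhere} between each pair of factions; your anchor-and-average recursion along a positive path from the anchored vertex $u$ to $i$, with the uniform step count $T(\epsilon)$ obtained by re-anchoring before every averaging step, is exactly the missing ingredient that handles the general case, and it is a sharper tool than the paper's Lemma~\ref{lemin1} (which only makes same-faction opinions close to \emph{each other}, not close to a prescribed extreme). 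Two minor remarks: the invariant holds surely, not just almost surely, as you note; and you only establish recurrent visits to $(o_{\min},o_{\min}+\epsilon)$, which suffices for the stated union event but is weaker than the oscillation between both ends that the paper's stopping times aim at --- your construction applies symmetrically with a negative edge into $F_1$, so the stronger conclusion is available at no extra cost and is worth stating given the lemma's title.
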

  \begin{proof}
    Note that $x_i(t)\in(o_{\min},o_{\max})$ for any $t\geq 0$ and any
    $i\in F_k$, $k\geq 3$, with probability one.  Pick a positive
    $\epsilon<({o_{\max}-o_{\min}})/{2}$ and define the intervals
    $A_{\epsilon}^\ell=(o_{\min},o_{\min}+\epsilon)$,
    $A_{\epsilon}^u=(o_{\max}-\epsilon,o_{\max})$ and
    $A_{\epsilon}^c=[o_{\min}+\epsilon,o_{\max}-\epsilon]$. Note that these
    three intervals are non-empty and form a partition of
    $(o_{\min},o_{\max})$.

    Now, take any $i\in F_k$, $k\geq 3$. First, define the random stopping
    times $\tau_{c\to \ell}(t)=\inf\{t^*>t\,|\,x_i(t^*)\in
    A_{\epsilon}^\ell|x_i(t)\in A_{\epsilon}^c\}$, $\tau_{\ell\to
      u}(t)=\inf\{t^*>t\,|\,x_i(t^*)\in A_{\epsilon}^u|x_i(t)\in
    A_{\epsilon}^\ell\}$ and $\tau_{u\to \ell}(t)=\inf\{t^*>t\,|\,x(t^*)\in
    A_{\epsilon}^\ell|x_i(t)\in A_{\epsilon}^u\}$.  Note that, if the pair
    $\{i,j\}$ is chosen, then the opinion of $i$ is always pushed towards
    $o_{\max}$ if $j\in F_1$, and always pushed towards $o_{\min}$ if $j\in
    F_2$ (this follows from the fact that for any $k\in F_1\cup F_2$,
    $x_k(t)=x_k(0)$ for all $t\geq 0$ with probability one). Then,
    following a reasoning similar to the one adopted in the proof of
    Theorem~\ref{th1}, we conclude that $\prob[\tau_{c\to\ell}(t)<\infty] =
    \prob[\tau_{\ell\to u}(t)<\infty]=\prob[\tau_{u\to\ell}(0)<\infty]=1$
    for any $t\geq 0$, from which the result follows.
\end{proof}

Note that the conditions for the underlying signed network in this
lemma are immediately satisfied if the network is complete
and satisfies clustering balance. This lemma is interpreted
as follows. Assume there are multiple antagonistic groups
of people such that for any two groups there exist two members
  that can communicate with each other. Additionally, assume
  that only two groups are already polarized in the opinion spectrum
with the rest having opinions at intermediate values (i.e.,
  mathematically, in the interval $(o_{\min},o_{\max})$).
Then, these non-polarized groups will have their opinions
  always fluctuating at intermediate values, i.e., their opinions do not
  polarize or reach consensus at some specific value.  Intuitively, since
the boomerang effect is persistent on the agents with
  intermediate values, these agents cannot settle on a definite opinion
since they continue to interact with antagonistic agents on both ends of
the spectrum. This behavior of opinion fluctuation has been
observed in other models in the presence of stubborn agents who forbid the
consensus of opinions among the agents~\cite{DA-GC-FF-AO:10}. Our work is
the first one to propose a persistent fluctuating behavior based on the
structure of friendly and antagonistic relationships in a social network.

\subsection{Numerical results}

For a complete graph satisfying structural balance, which is a particular case
satisfying the conditions of Theorem~\ref{th1}, Figure~\ref{f:sim1} shows
some example evolutions for self-weights $a_i~=~a~\in~(0,1)$ for any agent
$i$. We observed that, in general, the larger the self-weights, the more
time the polarization process takes.

Figure~\ref{f:s2} shows examples where the underlying signed network has three factions. Remarkably, under generic initial conditions (which are weaker initial conditions than the ones in Corollary~\ref{th2}), two factions tend to polarize and the opinions of the third one show persistent fluctuations.
%indeed these depolarization events are rare to occur and whenever the probability of picking a specific edge is uniform. 
%
%
%In fact, extensive simulations have shown that indeed these depolarization events are rare to occur under generic initial conditions and whenever the probability of picking a specific edge is uniform. %, and so that it is most likely that two factions will end up being polarized. 
%%%An example is shown in Figure~\ref{f:s2} under different self-weights.
%
%depending on the condition that $|F_1|,|F_2|>>|F_i|$ for $i\in F\setminus\{F_1,F_2\}$, and on how we set the probability of picking a specific edge. For example, extensive simualtions when the edges are selected uniformly have shown that polarization of two factions is the most probable outcome for a netwrk with clustering, as in the example in figure~\ref{f:sim1}. 
%
%\end{remark}

%\begin{remark}[Perturbations on balanced networks]
Finally, we provide numerical evidence of the behavior under networks that
are the result of perturbations on balanced networks.  Consider the
situation where a complete and balanced social network with two
antagonistic factions is randomly perturbed by flipping the sign of some of
its edges. Intuitively, for small perturbations, we would expect that
opinions, though not being able to perfectly polarize, would still
``attempt'' to be in such a state and fluctuate near extreme
values. Figure~\ref{f:sim3} shows some examples confirming this phenomenon.

%
%\begin{figure}[t]%[H]%[htp]
%  \centering
%  \label{f:sim1}
%  \subfloat{\label{f:s1}\includegraphics[width=0.32\linewidth]{plot_zero.pdf}} 
%  \subfloat{\label{f:s2}\includegraphics[width=0.32\linewidth]{plot_one.pdf}}
%  \subfloat{\label{f:s3}\includegraphics[width=0.32\linewidth]{plot_osc.pdf}} 
%\caption{Opinion evolution for three agents in different factions in a graph with clustering balance. The first two plots clearly show polarization and the last one shows fluctuations in the opinion value.}
%\end{figure}
%
\begin{figure}[t]%[H]%[htp]
  \centering
  % \subfloat[$a=0.25$]{\includegraphics[width=0.33\linewidth]{plotTotalEvol2_025.pdf}} 
  \subfloat[$a=0.25$]{\includegraphics[width=0.33\linewidth]{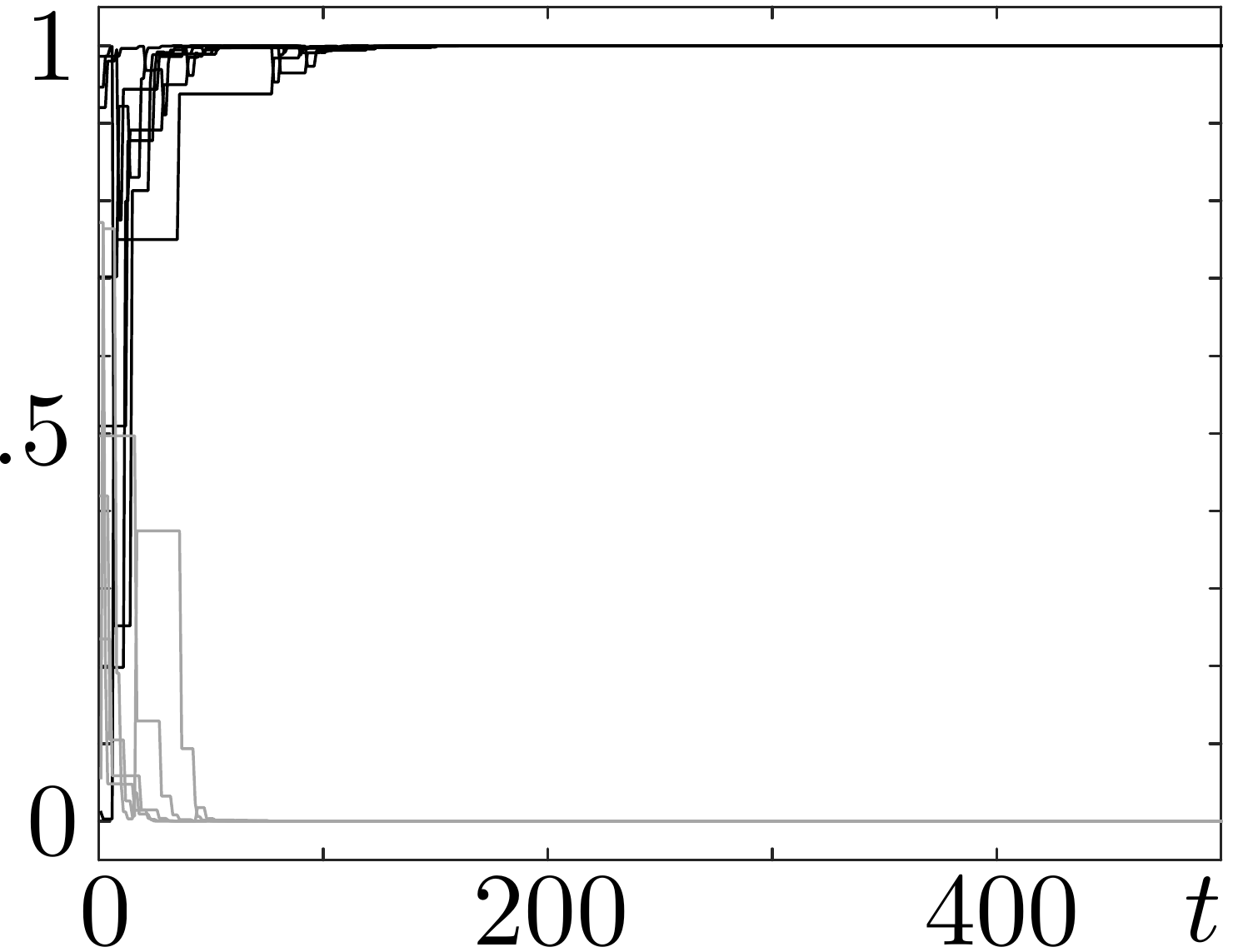}} 
  \subfloat[$a=0.5$]{\includegraphics[width=0.33\linewidth]{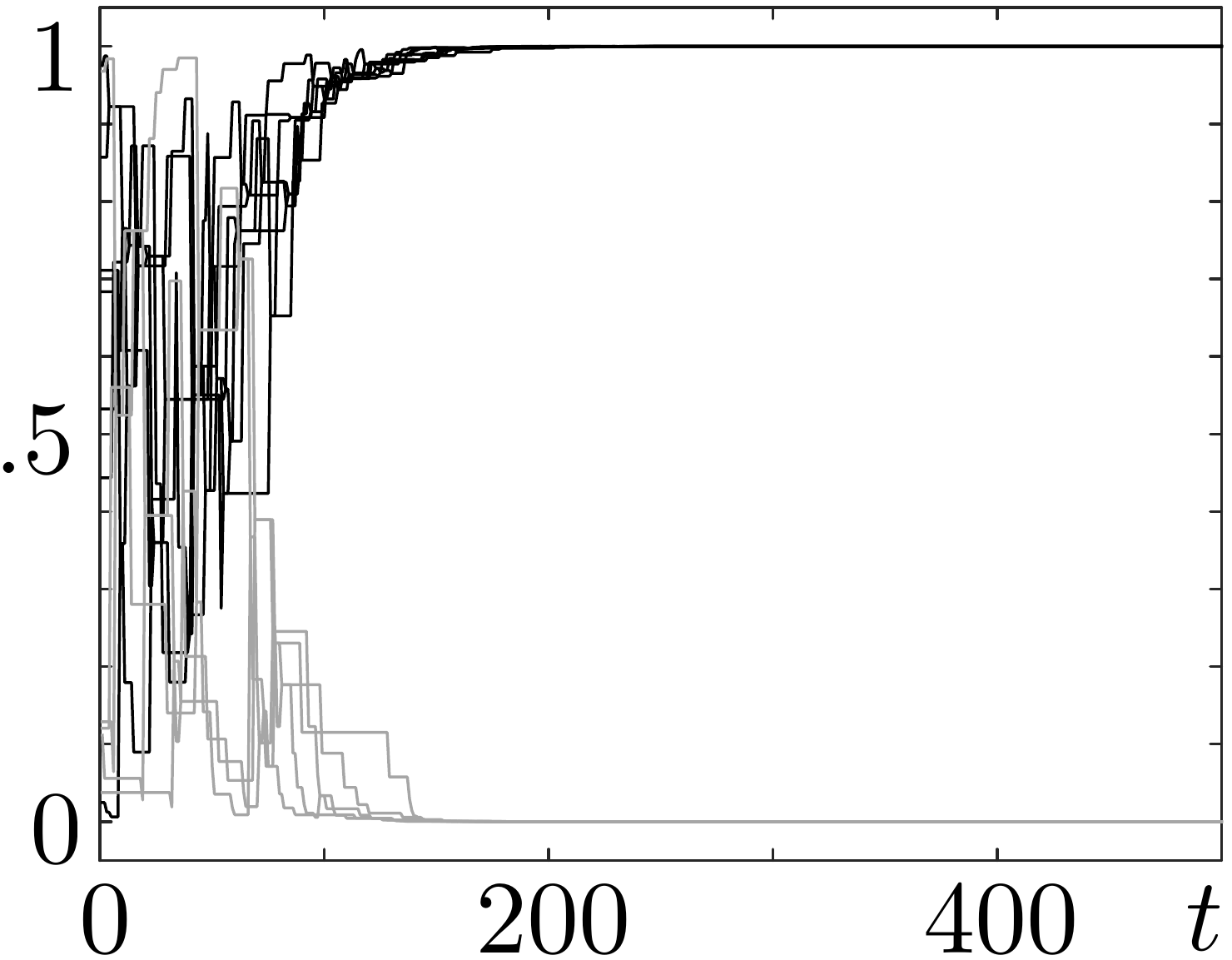}}
  \subfloat[$a=0.75$]{\includegraphics[width=0.33\linewidth]{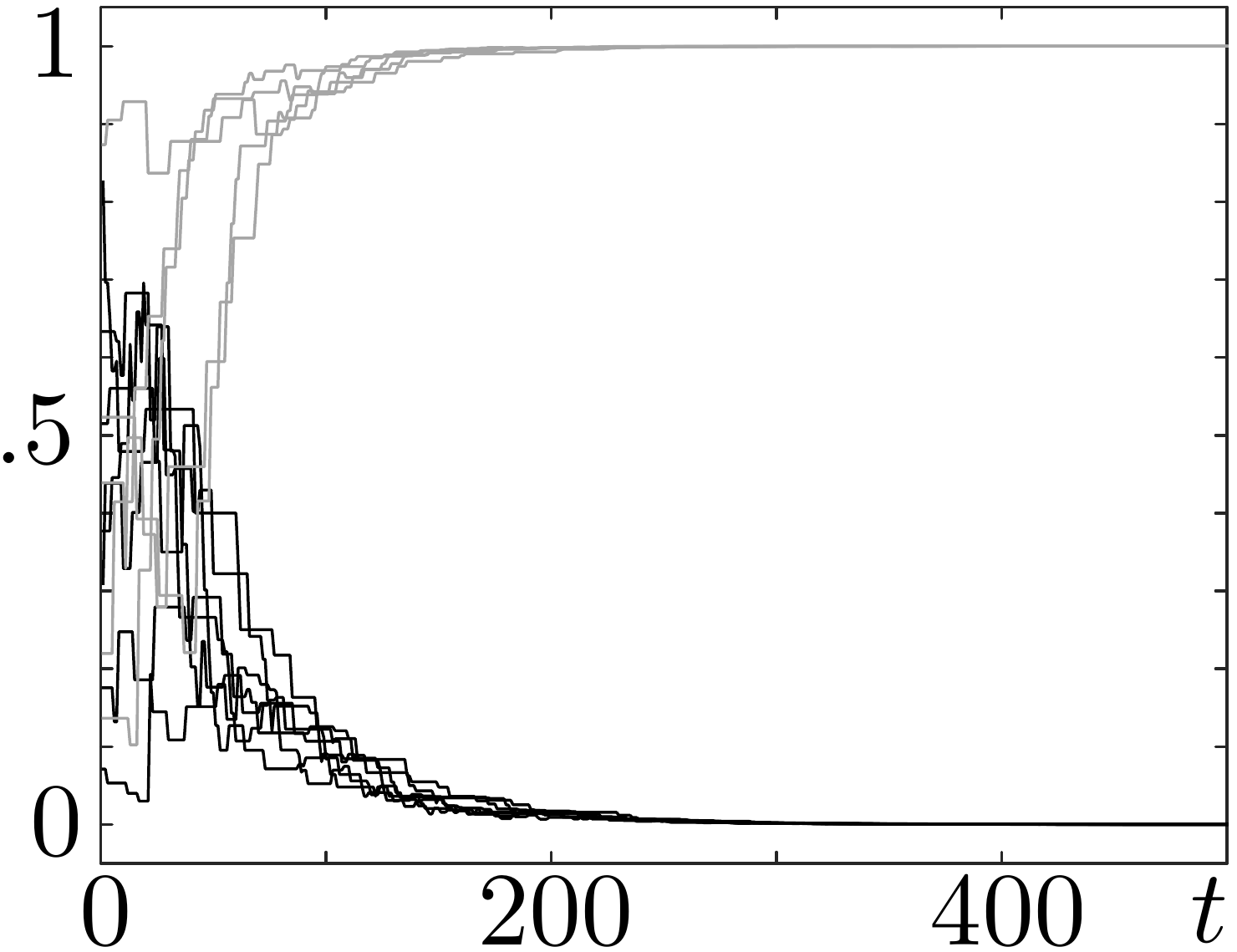}} 
\caption{Opinion evolution with $o_{\min}=0$ and $o_{\max}=1$ for a complete graph satisfying structural balance %complete balanced graph 
with two factions of $5$ (light gray) and $7$ (black) agents respectively. All agents are assumed to have the same self-weight $a$, and the edges to be updated are chosen uniformly. All simulations have randomly sampled initial conditions.}
  \label{f:sim1}
\end{figure}

% \begin{figure}[t]%[H]%[htp]
%   \centering
%   \label{f:sim2}
%   \subfloat{\label{f:s2}\includegraphics[width=0.32\linewidth]{plotTotalEvol.pdf}} 
% \caption{Opinion evolution for a social network satisfying clustering balance with three clusters of three, four and five agents (i.e., twelve curves are plotted). The black curves correspond to the cluster of three agents, the blue curves to the cluster of five, and the red curves to the cluster of four. Two of the clusters polarized their opinions (to $0$ and $1$), while the third one shows permanent fluctuations in its opinions.}
% \end{figure}
%
 \begin{figure}[t]%[H]%[htp]
   \centering
   \subfloat[$a=0.25$]{\includegraphics[width=0.33\linewidth]{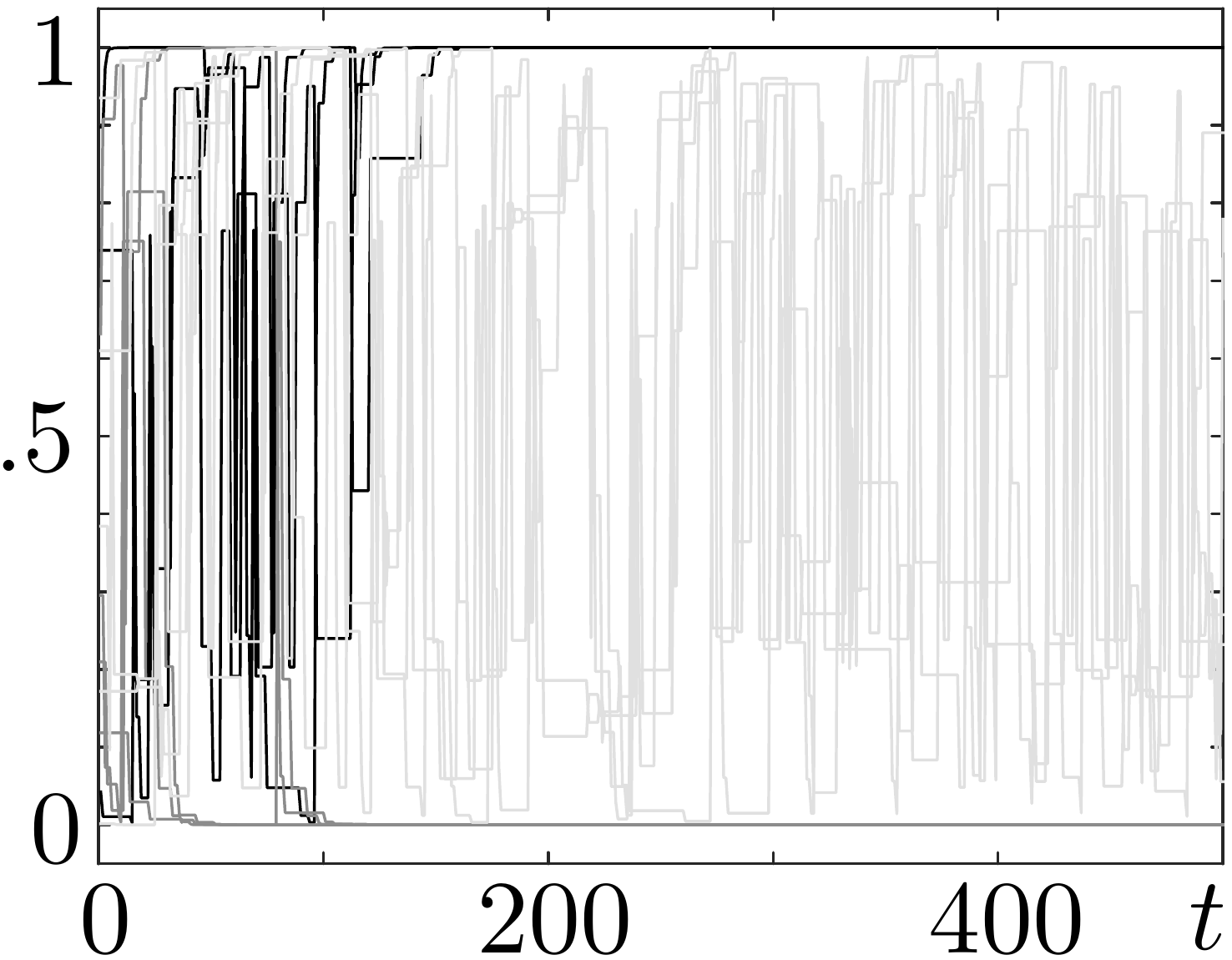}} 
  \subfloat[$a=0.5$]{\includegraphics[width=0.33\linewidth]{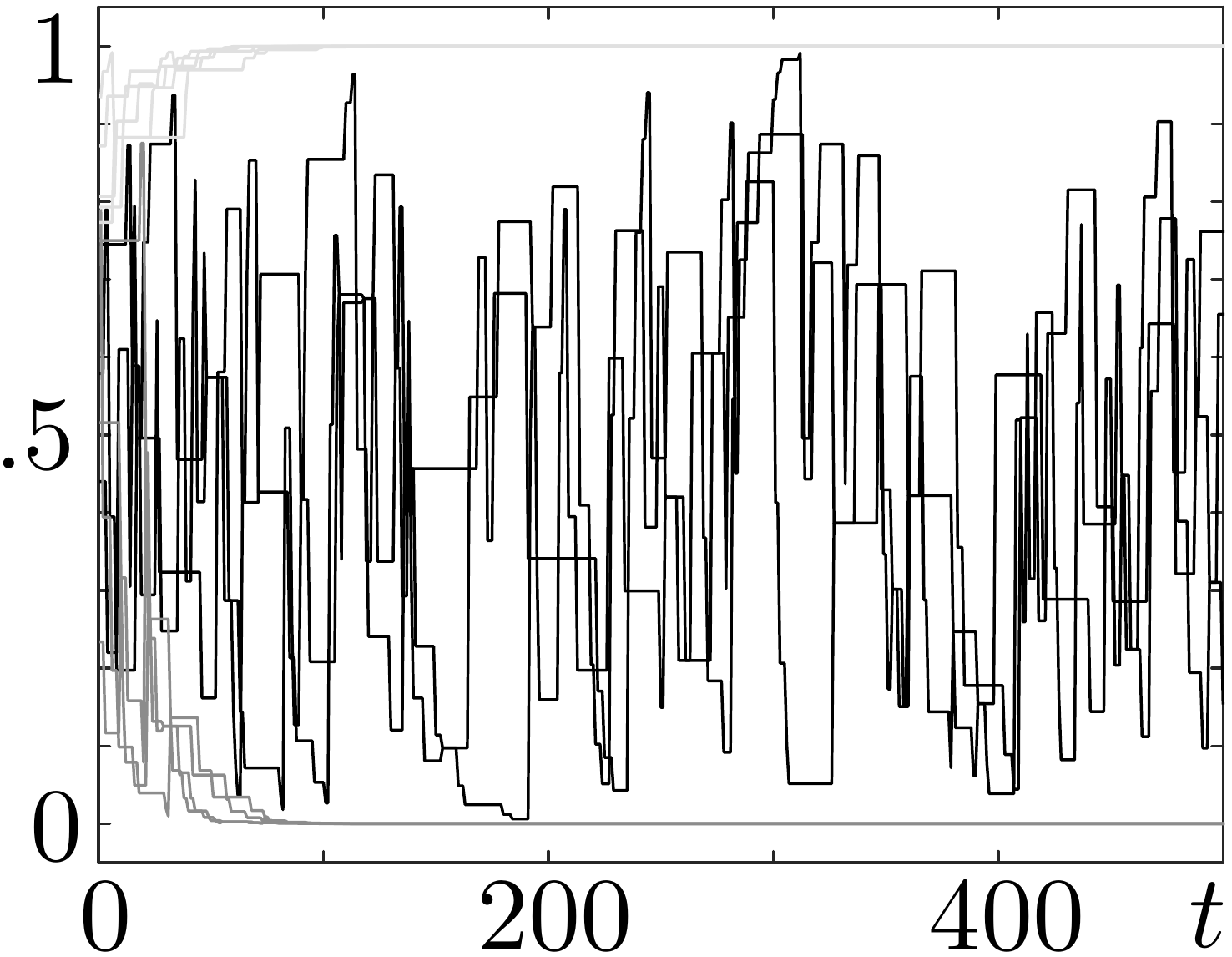}}
  \subfloat[$a=0.75$]{\includegraphics[width=0.33\linewidth]{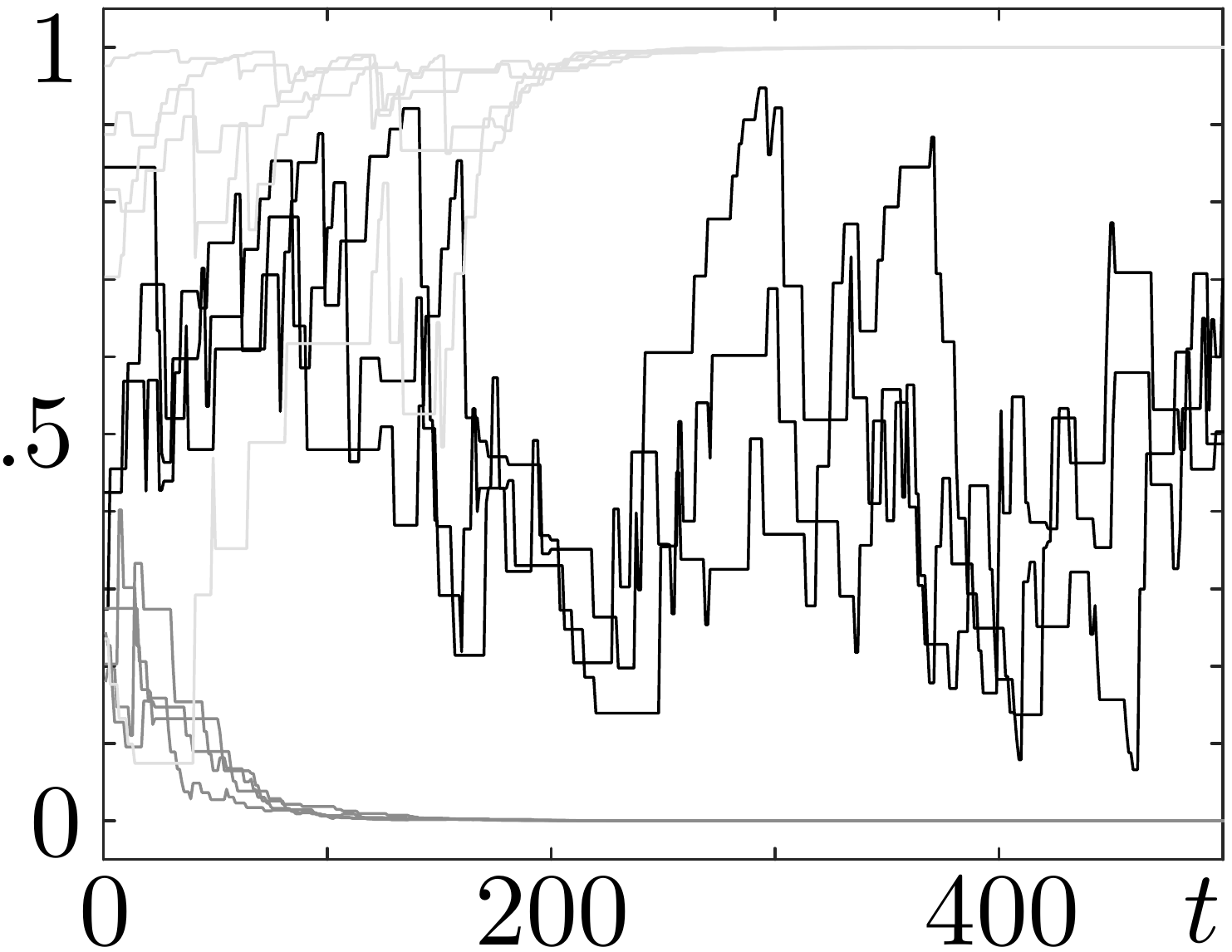}}
 \caption{Opinion evolution with $o_{\min}=0$ and $o_{\max}=1$ for a complete graph satisfying clustering balance %a complete graph satisfying clustering balance 
 with three clusters of three, four and five agents (i.e., twelve curves are plotted). The black curves correspond to the opinions of the cluster of three agents, the medium gray curves to the cluster of four, and the light gray curves to the cluster of five. Two of the clusters polarized their opinions (to $0$ and $1$), while the third one shows permanent fluctuations in its opinions. The shown plots were chosen so that the cluster with four agents always end up oscillating. All agents are assumed to have the same self-weight $a$, and the edges to be updated are chosen uniformly. All simulations have randomly sampled initial conditions.}
    \label{f:s2}
 \end{figure}

\begin{figure}[t]%[H]%[htp]
  \centering
  \subfloat[$a=0.25$]{\includegraphics[width=0.33\linewidth]{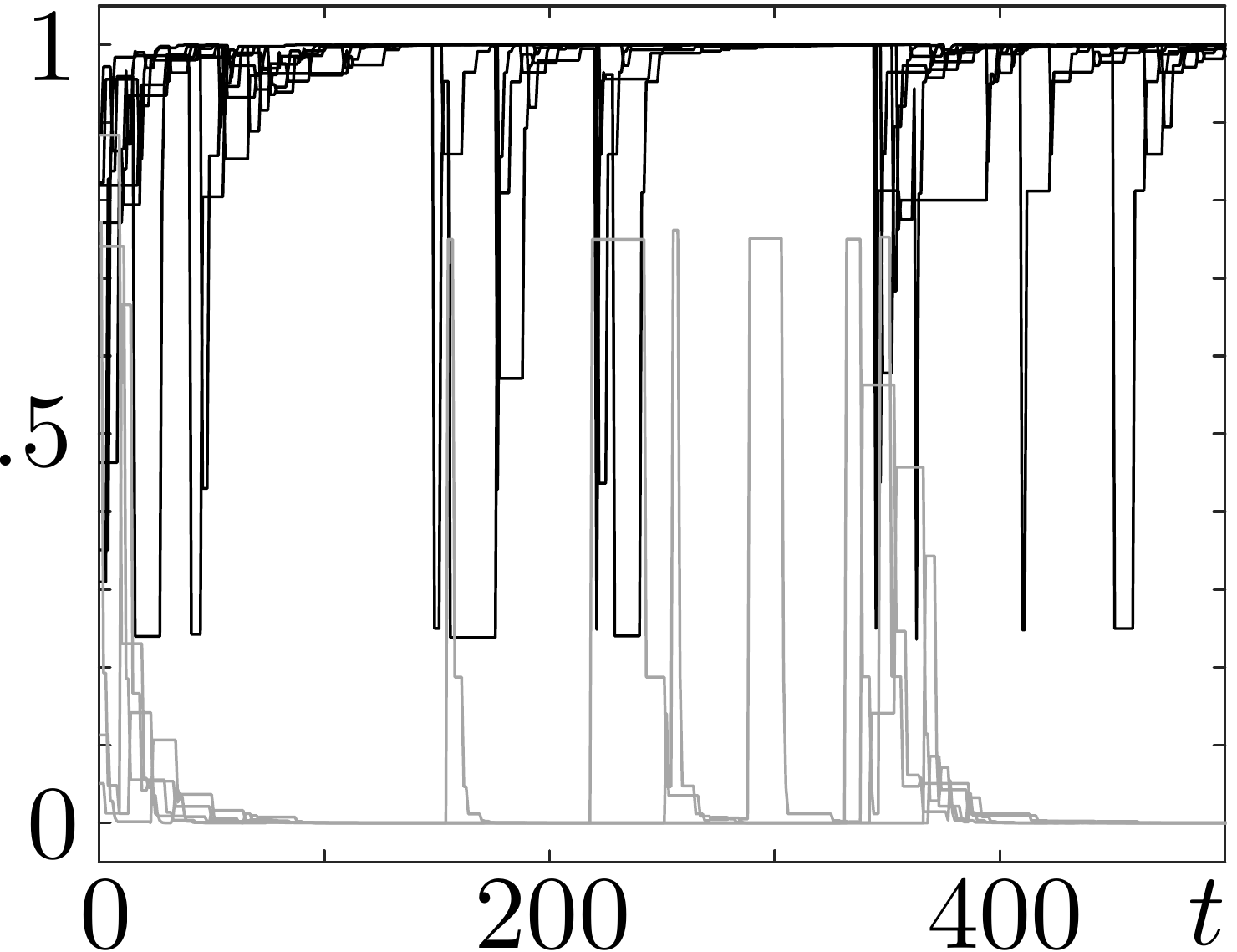}} 
  \subfloat[$a=0.5$]{\includegraphics[width=0.33\linewidth]{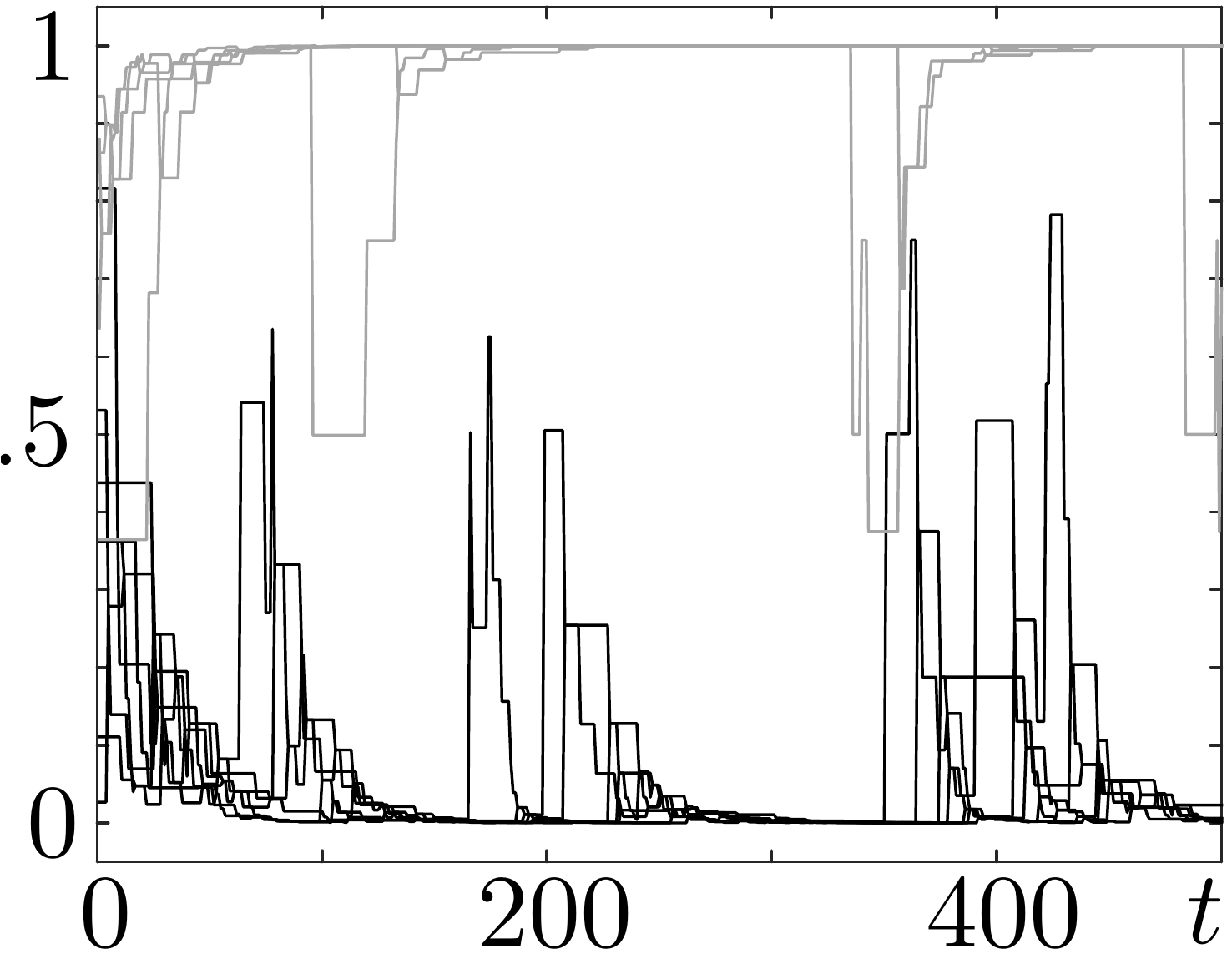}}
  \subfloat[$a=0.75$]{\includegraphics[width=0.33\linewidth]{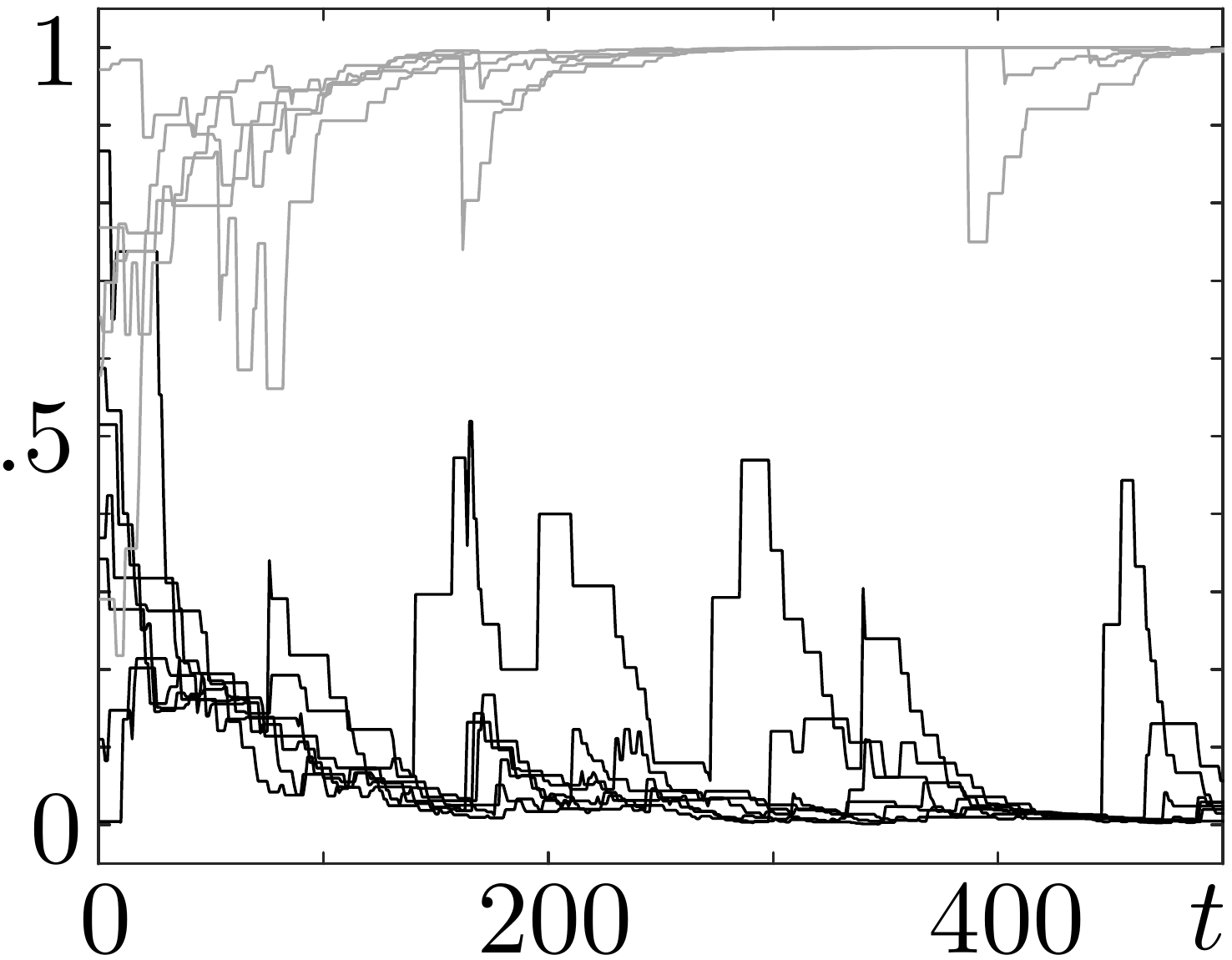}} 
  \caption{Opinion evolution with $o_{\min}=0$ and $o_{\max}=1$ for a complete graph %complete balanced graph 
  that originally satisfied structural balance with %had 
  two factions of $5$ (light gray) and $7$ (black) agents and is now under a perturbation of $3$ of its edges having the opposite sign. All agents are assumed to have the same self-weight $a$, and the edges to be updated are chosen uniformly. All simulations have randomly sampled initial conditions.}
  \label{f:sim3}
\end{figure}

\section{Conclusion}
We have proposed a novel simple model for opinion dynamics over signed graphs. This model provides intuitive behavior and results on the opinion evolution under sociologically relevant sign structures of the underlying social network. Future work may be the inclusion of directional updating (i.e., updating one opinion at a time) in the model, as well as its analysis under relevant directed network structures. Another open direction for research is an analytical understanding of the transient time and convergence analysis for the polarization of opinions of the factions in a balanced network.

\appendix

\begin{proof}[Proof for Lemma~\ref{lemin1}]
%We need to prove that: for any $x(t)$, $t\geq 0$, any $0<\epsilon<o_{\max}-o_{\min}$ and any pair of vertices $i$ and $j$, there always exist a finite selection sequence $s_o\in\mathcal{S}$ with $T$ number of elements, i.e., with length $T$, such that $|x_i(t+T)-x_j(t+T)|<\epsilon$ if $i$ and $j$ belong to the same faction, and $|x_i(t+T)-x_j(t+T)|>\epsilon$ if $i$ and $j$ belong to different factions. %Notice that since such $s_o$ is finite, then $\prob[s_o]>0$. 
%

Since the network satisfies the 2-sign arrangement, for any $i$ and $j$ that belong to the same faction, there exists a nonempty collection of %edge 
paths $\mathcal{P}^+_{i\leftrightarrow	j}$ between $i$ and $j$ in which each path contains only positive edges. Let $p\in\mathcal{P}^+_{i\leftrightarrow j}$, then, from statement~\ref{l111} from Theorem~\ref{th1}, we observe that if we only update pair of vertices present along the path $p$, then they can become arbitrarily close. Then, we can construct a finite sequence of edges such that it includes only edges from one or more different paths in $\mathcal{P}^+_{i\leftrightarrow j}$ in a sufficient number so that $i$ and $j$ become arbitrarily close. 
% $A\in\mathcal{P}^+_{i\leftrightarrow j}$ 
%there exists some $A\subset s_o$ with $A\in\mathcal{P}^+_{i\leftrightarrow	j}$. 
This proves the first part of the lemma. 

Now, we consider the case where $i$ and $j$ belong to different factions. Notice that equation~\eqref{f1} clearly shows that we can always make the opinions of two vertices joined by a negative edge arbitrarily apart by continuously sampling such edge. 
 %If $i$ and $j$ are connected by an edge, then, the second part of the claim follows trivially. If not, then, 
Let $\mathcal{P}^-_{i\leftrightarrow j}$ be the nonempty collection of %edge 
paths between $i$ and $j$. Due to the structure of the network, any $p\in\mathcal{P}^-_{i\leftrightarrow j}$ must have an odd number of negative edges. Then, $p$ can be constructed by appropriately concatenating sequences of positive edges with sequences of negative edges. 
%
%Let $p_i^+$ is a sequence of $\ell\geq 1$ edges in $\mcE_+$ and $p_i^-\in\mcE_-$, i.e., be a negative edge. 
%only have two possible structures: 1) it first goes through $p_1^-$, then possibly to some $p_2^+$, then possibly to some $p_3^-$
%
%
%either $(p_1^-,p_1^+,...)$ or $(p_1^-,p_1^+,...)$ 
%1) it starts with a negative edge, followed by a possible sequence of positive edges, followed by a possible another negative edge 
%
%
%be partitioned as a repetitive sequence of 
%
%
% $p=(\union_{i}p_i^+)\cup(\union_{i}p_i^-)$, where any $p_i^+$ is a sequence of edges in $\mcE_+$ and any $p_i^-\in\mcE_-$, i.e., is a negative edge. 
 %
%====
%
%can only contain one negative edge $e_n=\{\ell_1,\ell_2\}\in s$ and can be partitioned as $s=s_l\cup\{e_n\}\cup s_r$, where $s_l$ is the edge path between $i$ and $\ell_1$ (if $\ell_1=i$ then $s_l=\emptyset$) and $s_r$ is the edge path between $j$ and $\ell_2$ (if $\ell_2=j$ then $s_r=\emptyset$). %Since for any $s\in\mathcal{S}^-$ we can 
%
%
%From our discussion above, we can make the opinions of the agents participating in any sequence $p_i^+$ arbitrarily closer, and the opinions of the agents in $p_i^-$ arbitrarily apart. Then, it is possible to come up with a finite sequence of edges $s_o$ such that $i$ and $j$ become arbitrarily apart. This finishes the proof of our claim. 
From our discussion above, we can make the opinions of the agents participating in any of these positive sequences (if any) arbitrarily close, and the opinions of the agents in any of the negative edges arbitrarily apart. Then, it is possible to come up with a finite sequence of edges such that $i$ and $j$ become arbitrarily apart. This finishes the proof of the lemma. 
\end{proof}

\bibliographystyle{plainurl}
\bibliography{alias,Main,FB}

\end{document}